\newcommand\ip[2]{\langle #1, #2\rangle}
\newcommand\natoms{K}
\newcommand\nsig{N}
\newcommand\sparsity{S}
\newcommand\ddim{d}
\newcommand\eps{\varepsilon}
\newcommand\epsmax{\varepsilon_{\max}}
\newcommand\epsmin{\varepsilon_{\min}}
\newcommand\dico{\Phi}
\newcommand\atom{\phi}
\newcommand\dicoset{\mathcal{D}}
\newcommand\pdico{\Psi}
\newcommand\ppdico{\bar\Psi}
\newcommand\patom{\psi}
\newcommand\ppatom{\bar\psi}
\newcommand\amp{c}
\newcommand\noise{r}
\newcommand\nsigma{\rho}
\newcommand\net{\mathcal{N}}
\newcommand\signop{\operatorname{sign}}
\newcommand\argmax{\operatorname{argmax}}
\newcommand{\R}{{\mathbb{R}}}
\newcommand{\E}{{\mathbb{E}}}
\renewcommand{\P}{{\mathbb{P}}}
\newenvironment{proofsketch}{\par\noindent{\bf Proof idea\ }}{\hfill\BlackBox\\[2mm]}
\begin{document}

\title{Local Identification of Overcomplete Dictionaries}

\author{\name Karin Schnass \email karin.schnass@uibk.ac.at\\
\addr  Department of Mathematics\\
University of Innsbruck\\
 Technikerstra\ss e 19a\\
  6020 Innsbruck, Austria}

\editor{Shie Mannor}

\maketitle

\begin{abstract}%
This paper presents the first theoretical results showing that stable identification of overcomplete $\mu$-coherent dictionaries $\dico \in \R^{d\times K}$ is locally possible from training signals with sparsity levels $S$ up to the order $O(\mu^{-2})$ and signal to noise ratios up to $O(\sqrt{d})$. In particular the dictionary is recoverable as the local maximum of a new maximisation criterion that generalises the K-means criterion. For this maximisation criterion results for asymptotic exact recovery for sparsity levels up to $O(\mu^{-1})$ and stable recovery for sparsity levels up to $O(\mu^{-2})$ as well as signal to noise ratios up to $O(\sqrt{d})$ are provided. These asymptotic results translate to finite sample size recovery results with high probability as long as the sample size $N$ scales as $O(K^3dS \tilde \eps^{-2})$, where the recovery precision $\tilde \eps$ can go down to the asymptotically achievable precision. Further, to actually find the local maxima of the new criterion, a very simple Iterative Thresholding and K (signed) Means algorithm (ITKM), which has complexity $O(dKN)$ in each iteration, is presented and its local efficiency is demonstrated in several experiments.
\end{abstract}

\begin{keywords}
dictionary learning, dictionary identification, sparse coding, sparse component analysis, vector quantisation, K-means, finite sample size, sample complexity, maximisation criterion, sparse representation
\end{keywords}

\section{Introduction}\label{sec:intro}

Be it the 300 million photos uploaded to Facebook per day, the 800GB the large Hadron collider records per second or the 320.000GB per second it cannot record, it is clear that we have reached the age of big data. Indeed, in 2012, the amount of data existing worldwide is estimated to have reached 2.8 ZB = 2.800 billion GB and while 23 \% of these data are expected to be useful if analysed, only 1\% actually are. So how do we deal with this big data challenge?
The key concept, that has driven data processing and data analysis in the past decade, is that even high-dimensional data has intrinsically low complexity, meaning that every data point $y$ can be represented as linear combination of a sparse (small) number of elements or atoms $\atom_i \in \R^\ddim, \|\atom\|_2=1$ of an overcomplete dictionary $\dico = (\atom_1,\ldots \atom_\natoms)$, that is,
$$ y \approx \dico_I x_I= \sum_{i\in I} x(i) \varphi_i, $$ 
for a set $I$ of size $S$, $|I|=S$, which is small compared to
the ambient dimension, $S\ll d \leq K$.\\
These sparse components do not only describe the data but the representations can also be used for a myriad of efficient sparsity based data processing schemes, ranging from denoising, \citep{doelte06}, to  compressed sensing, \citep{do06cs,carota06}. Therefore a promising tool both for data analysis and data processing, that has emerged in the last years, is dictionary learning, also known as sparse coding or sparse component analysis. Dictionary learning addresses the fundamental question of how to automatically learn a dictionary, providing sparse representations for a given data class. That is, given $\nsig$ signals $y_n\in \R^\ddim$, stored as columns in a matrix $Y=(y_1,\ldots, y_\nsig)$, find a decomposition
$$ Y \approx \dico X$$
into a $\ddim \times \natoms$ dictionary matrix $\dico$ with unit norm columns and a $\natoms \times \nsig$ coefficient matrix with sparse columns. \\
Until recently the main research focus in dictionary learning has been on the development of algorithms.
Thus by now there is an ample choice of learning algorithms, that perform well in experiments and are popular in applications, \citep{olsfield96,  krra00, kreutz03, ahelbr06, yablda09, mabaposa10, sken10, rubrel10}. However, slowly the interest is shifting and researchers are starting to investigate also the theoretical aspects of dictionary learning. Following the first theoretical insights, originating in the blind source separation community, \citep{zipe01,gethci05}, there is now a set of generalisation bounds predicting how well a learned dictionary can be expected to sparsely approximate future data, \citep{mapo10, vamabr11, megr12, grjebaklse13}. These results give a theoretical foundation for dictionary learning as data processing tool, for example for compression, but unfortunately do not give guarantees that an efficient algorithm will find/recover a good dictionary provided that it exists. However, in order to justify the use of dictionary learning as data analysis tool, for instance in blind source separation, it is important to provide conditions under which an algorithm or scheme can identify the dictionary from a finite number of training signals, that is, the sources from the mixtures. Following the first dictionary identification results for the $\ell_1$-minimisation principle, which was suggested in \citet{zipe01}/\citet{pl07}, by \citet{grsc10, gewawrXX, bagrje14} and for the ER-SPuD algorithm for learning a basis in \citep{spwawr12}, 2013 has seen a number of interesting developments. First in \citet{sc14} it was shown that the K-SVD minimisation principle suggested in \citet{ahelbr06} can locally identify overcomplete tight dictionaries. Later in \citet{argemo13, aganne13} algorithms with {\it global} identification guarantees for coherent dictionaries were presented. Finally in \citet{aganjaneta13} it was shown that an alternating minimisation method is locally convergent to the correct generating dictionary. One aspect that all these results have in common is that the sparsity level of the training signals required for successful identification is of order $O(\mu^{-1})$ or $O(\sqrt{d})$ for incoherent dictionaries. Considering that on average sparse recovery in a given dictionary is successful for sparsity levels $O(\mu^{-2})$, \citep{tr08, scva07}, and that for dictionary learning we usually have a lot of training signals at our disposal, the same sparsity level should be sufficient for dictionary learning and indeed in this paper we provide the first indication that global dictionary identification could be possible for sparsity levels $O(\mu^{-2})$ by proving that it is locally possible. Further we show that in experiments a very simple iterative algorithm, based on thresholding and K signed means, is locally successful.\\
The paper is organised as follows. After introducing all necessary notation in Section~\ref{sec:notation} we present a new optimisation criterion, motivated by the analysis of the K-SVD principle, \citep{sc14}, in Section~\ref{sec:motivation}. In Section~\ref{sec:asymptotic} we give asymptotic identification results both for exact and stable recovery, which in Section~\ref{sec:finitesamples} are extended to results to finite sample sizes. Section~\ref{sec:experiments} provides an algorithm for actually finding a local optimum and some experiments confirming the theoretical results. Finally in the last section we compare the results for the new criterion to existing identification results, discuss the implications of these local results for global dictionary identification algorithms and point out directions for future research.

\section{Notations and Conventions}\label{sec:notation}
Before we jump into the fray, we collect some definitions and lose a few words on notations; usually subscripted letters will denote vectors with the exception of $\amp$ and $\eps$ where they are numbers, eg. $ (x_1,\ldots,x_\natoms)=X \in \R^{d\times \natoms}$ vs. $\amp=(\amp_1, \ldots, \amp_\natoms)\in \R^\natoms$, however, it should always be clear from the context what we are dealing with. \\
We consider a {\bf dictionary} $\dico$ a collection of $K$ unit norm vectors $\atom_i\in \R^d$, $\|\atom_i\|_2=1$. By abuse of notation we will also refer to the $d \times K$ matrix collecting the atoms as its columns as the dictionary, that is $\dico=(\atom_i, \ldots \atom_K)$. 
The maximal absolute inner product between two different atoms is called the {\bf coherence} $\mu$ of the dictionary, $\mu=\max_{i \neq j}|\ip{\atom_i}{\atom_j}|$. By $\dico_I$ we denote the restriction of the dictionary to the atoms indexed by $I$, that is $\dico_I=(\atom_{i_1}\ldots \atom_{i_\sparsity} )$, $i_j\in I$. We indicate the conjugate transpose of a matrix with a $^\star$, for example $\dico^\star$ would be the transpose of $\dico$.\\ 
The set of all dictionaries of a given size ($d \times K$) is denoted by $\mathcal{D}$.
For two dictionaries $\dico,\pdico\in \mathcal{D}$ we define the distance between each other as the maximal distance between two corresponding atoms, 
\begin{align}
d(\dico,\pdico):=\max_i \|\atom_i-\patom_i\|_2.\notag
\end{align}
We consider a {\bf frame} $F$ a collection of $K\geq d$ vectors $f_i\in\R^\ddim$ for which there exist two positive frame constants $A,B$ such that for all $v \in \R^\ddim$ we have
\begin{align}\label{framebound}
A \|v\|^2_2 \leq \sum_{i=1}^K |\ip{f_i}{v}|^2 \leq B \|v\|^2_2.
\end{align}
From \eqref{framebound} it follows that $F$, interpreted as $d\times K$ matrix, has rank $d$ and that its non-zero singular values are in the interval $[\sqrt{A}, \sqrt{B}]$. If $B$ can be chosen equal to $A$, that is $B=A$, the frame is called {\bf tight}.  If all frame elements $f_i$ have unit norm, we call $F$ a unit norm frame. For more details on frames, see e.g. \cite{ch03}.\\ 
Finally we introduce the Landau symbols $O,o$ to characterise the growth of a function. We write 
\begin{align}
 f(t)=O(g(t)) &\quad\mbox{ if }\quad \lim_{t \rightarrow 0/\infty} f(t)/g(t)= C<\infty \notag \\
\mbox{and}\quad f(t)=o(g(t))&\quad \mbox{ if }\quad \lim_{t \rightarrow 0/\infty} f(\eps)/g(\eps)=0.\notag
\end{align}.

\section{A Response Maximisation Criterion}\label{sec:motivation}
One of the origins of dictionary learning can be found in the field of vector quantisation, where the aim is to find a codebook (dictionary) such that the codewords (atoms) closely represent the data, 
that is
\begin{align}
\min_{\dico, X} \|Y-\dico X\|_F^2 \quad \mbox{s.t.} \quad x_n \in \{ e_i\}_i. \notag
\end{align}
Indeed the vector quantisation problem can be seen as an extreme case of dictionary learning, where we do not only want all our signals to be approximately $1$-sparse but also want the single non-zero coefficient equal to one. On the other hand we allow the atoms (codewords) to have any length. The problem above is usually solved by a K-means algorithm, which alternatively separates the training data into K clusters, each assigned to one codeword, and then updates the codeword to be the mean of the associated train signals. For more detailed information about vector quantisation or the K-means algorithm see for instance \cite{gega92} or the introduction of \cite{ahelbr06}.
If we relax the condition that each coefficient has to be positive, but in turn ask for the atoms to have unit norm, we are already getting closer to the concept of $1$-sparse dictionary learning,
\begin{align}
\min_{\dico\in \dicoset, X} \|Y-\dico X\|_F^2 \quad \mbox{s.t.} \quad x_n \in \{ \pm e_i\}_i. \notag
\end{align}
This minimisation problem can be rewritten as
\begin{align}
\min_{\dico \in \dicoset} \sum_n \min_{i,\sigma_i=\pm 1} \|y_n - \sigma_i \atom_i \|_2^2
& = \min_{\dico \in \dicoset} \sum_n \min_{i,\sigma_i =\pm 1} \|y_n\|^2_2 - 2 \sigma_i\ip{ y_n}{\atom_i} + \|\atom_i \|_2^2\notag \\
&= \|Y\|^2_F + N - 2 \max_{\dico \in \dicoset} \sum_n \max_i |\ip{ y_n}{\atom_i}|, \notag
\end{align}
and is therefore equivalent to the maximisation problem
\begin{align}\label{maxSis1}
\max_{\dico \in \dicoset} \sum_n \max_i  |\ip{ y_n}{\atom_i}|.
\end{align}
A local maximum of \eqref{maxSis1} can be found with a signed K-means algorithm, which assigns each training signal to the atom of the current dictionary giving the largest response in absolute value and then updating the atom as normalised signed mean of the associated training signals, see Section \ref{sec:experiments} for more details.
The question now is how do we go from these $1$-sparse dictionary learning formulations to $S$-sparse formulations with $S>1$. The most common generalisation, which provides the starting point for the MOD and the K-SVD algorithm, is to give up all constraints on the coefficients except for $S$-sparsity and to minimise,
\begin{align}\label{ksvdcrit}
(P_{P2})\hspace{2cm} \min_{\dico \in \dicoset, X} \|Y-\dico X\|_F^2 \quad \mbox{s.t.} \quad \|x_n\|_0 \leq S.
\end{align}
However, rewriting the problem we see that this formulation does not reduce to the same maximisation problem in case $S=1$. Then the best one term approximation in the given dictionary is simply the largest projection onto one atom and we have
\begin{align}
\min_{\dico \in \dicoset} \sum_n \min_{i, x_i} \|y_n - x_i \atom_i \|_2^2 &=
\min_{\dico \in \dicoset} \sum_n  \min_{i} \|y_n - \ip{\atom_i}{y_n} \atom_i \|_2^2\notag \\
& = \|Y\|_F^2 - \max_{\dico \in \dicoset} \sum_n\max_i  |\ip{ y_n}{\atom_i}|^2, \notag
\end{align}
leading instead to the maximisation problem 
\begin{align}
\max_{\dico \in \dicoset} \sum_n \max_i |\ip{ y_n}{\atom_i}|^2 \qquad \mbox{vs.} \qquad \max_{\dico \in \dicoset} \sum_n\max_i  |\ip{ y_n}{\atom_i}|. \notag
\end{align}
A local maximum can now be found using the same partitioning strategy as before but updating the atoms as largest singular vector rather than signed mean of the associated training signals, requiring K SVDs as opposed to K means. While the minimisation problem \eqref{ksvdcrit} is definitely the most effective generalisation for dictionary learning when the goal is compression, it brings with it some complications when used as analysis tool. Indeed in \cite{sc14} it has been shown that for $S=1$ the K-SVD criterion \eqref{ksvdcrit} can only identify the underlying dictionary from sparse random mixtures to arbitrary precision (given enough training samples) if this dictionary is tight and it is conjectured that the same holds for $S\geq 1$. Roughly simplified the reason for this is that for random sparse signals $\dico_I x_I$ and an $\eps$-perturbation $\pdico$ the average of the largest squared response behaves like 
\begin{align}
\frac{1}{2} \left(1-\frac{\eps^2}{2}+ c(\pdico)\right)^2 + \frac{1}{2} \left(1-\frac{\eps^2}{2} - c(\pdico)\right)^2= 1- \eps^2 + \frac{\eps^4}{4} + c(\pdico)^2. \notag
\end{align}
If $\dico$ is tight the term $c(\pdico)$ is constant over all dictionaries and therefore there is a local maximum at $\dico$.
From the above we also see that the average of the largest \emph{absolute} response should behave like
\begin{align}
\frac{1}{2} \left|1-\frac{\eps^2}{2}+ c(\pdico)\right| + \frac{1}{2} \left|1-\frac{\eps^2}{2} - c(\pdico)\right|= 1-\frac{\eps^2}{2}, \notag 
\end{align}
meaning that we should have a maximum at $\dico$ also if it is non-tight.
This suggests as alternative way to generalise the K-means optimisation principle for dictionary identification to simply maximise the absolute norm of the $S$-largest responses,
\begin{align}\label{thecriterion}
(P_{R1}) \hspace{2cm} \max_{\pdico \in \dicoset} \sum_n \max_{|I| = S} \| \pdico_I^\star y_n\|_1.
\end{align}
Other than for the K-SVD criterion it is not obvious that there should be a local optimum of \eqref{thecriterion} at $\dico$ even if all signals $y_n$ are perfectly $S$-sparse in $\dico$. Therefore it is quite intriguing that we will not only be able to prove local identifiability of any generating dictionary via \eqref{thecriterion} from randomly sparse signals, but that these identifiability properties are stable under coherence and noise. However, before we get to the main result in Theorem~\ref{th:stablefinite} on page \pageref{th:stablefinite}, we first have to lay the foundation, by providing suitable random signal models and by studying the asymptotic identifiability properties of the new principle.

\section{Asymptotic Results}   \label{sec:asymptotic}

We can get to the asymptotic version of the $S$-response maximisation principle in \eqref{thecriterion} simply by replacing the sum over the training signals with the expectation, leading to
\begin{align}\label{expcrit}
\max_{\pdico \in \mathcal D}\, \E_y \left( \max_{|I| = S} \| \pdico_I^\star y\|_1 \right).
\end{align}
Next we need a random sparse coefficient model to generate our signals $y$. 
We make the following definition, see also \cite{sc14}.
\begin{definition} 
A probability distribution (measure) $\nu$ on the unit sphere $S^{K-1} \subset \R^K$ is called symmetric if for all measurable sets $\mathcal{X}\subseteq S^{K-1}$, for all sign sequences $\sigma \in \{-1,1\}^K$ and all permutations $p$ we have
\begin{align}
\nu( \sigma \mathcal X)=\nu(\mathcal X), \quad &\mbox{where} \quad \sigma \mathcal X := \{ (\sigma_1 x_1, \ldots, \sigma_K x_K ) : x \in \mathcal{X} \},\quad \mbox{and}\notag\\
\nu( p( \mathcal X))=\nu(\mathcal X), \quad &\mbox{where} \quad p(\mathcal X ) := \{ ( x_{p(1)}, \ldots, x_{p(K)} ) : x \in \mathcal{X} \}\notag.
\end{align}
\end{definition}
Setting $y=\dico x$ where $x$ is drawn from a symmetric probability measure $\nu$ on the unit sphere has the advantage that for dictionaries which are orthonormal bases
the resulting signals have unit norm and for general dictionaries the signals have unit square norm in expectation, that is $\E (\|y\|_2^2)=1$. This reflects the situation in practical application, where we would normalise the signals in order to equally weight their importance.\\
One example of such a probability measure can be constructed from a non-negative, non-increasing sequence $\amp \in \R^\natoms$ with $\|c\|_2=1$, which we permute uniformly at random and provide with random $\pm$ signs. To be precise for a permutation $p:\{1,...,\natoms\}\rightarrow\{1,...,\natoms\}$ and a sign sequence $\sigma$, $\sigma_i=\pm 1$, we define the sequence $\amp_{p, \sigma}$ component-wise as
$\amp_{p, \sigma}(i):=\sigma_i \amp_{p(i)}$, and set 
$\nu(x)= (2^\natoms \natoms!)^{-1}$ if there exist $p, \sigma$ such that $x=\amp_{p, \sigma}$ and $\nu(x)=0$ otherwise.
While being very simple this measure exhibits all the necessary structure and indeed in our proofs we will reduce the general case of a symmetric measure to this simple case.\\
So far we have not incorporated any sparse structure in our coefficient distribution. 
To motivate the sparsity requirements on our coefficients we will recycle the simple negative example of a sparse coefficient distribution for which the original generating dictionary is not at a local maximum of \eqref{expcrit} with $S=1$ from \cite{sc14}.

\begin{example}\label{ex:flatcoeff}
Let $U$ be an orthonormal basis and let the signals be constructed as $y=\dico x$. If $x$ is randomly 2-sparse with 'flat' coefficients, that is, drawn from the simple symmetric probability measure with base sequence $c$, where $c_1=c_2=1/\sqrt{2}$, $c_i=0$ for $i\geq 3$, then $U$ is not a local maximum of~\eqref{expcrit} with $S=1$. \\
Indeed, since the signals are all 2-sparse, the maximal inner product with all atoms in $U$ is the same as the maximal inner product with only $d-1$ atoms. This degree of freedom we can use to construct an ascent direction. Choose $U_\eps=(u_1, \ldots, u_{d-1}, (u_d + \eps u_1)/\sqrt{1+\eps^2})$, then we have
\begin{align}
\E_y\left(\| U_\eps^\star y\|_\infty\right)&= \E_x\left(\left\| \left(x_1, \ldots, x_{d-1},  \frac{x_d + \eps x_1}{\sqrt{1+\eps^2}}\right)\right\|_\infty\right)\notag \\
&=\E_x \max \left\{\frac{1}{\sqrt{2}}, \left|  \frac{x_d + \eps x_1}{\sqrt{1+\eps^2}}\right|\right\}\notag \\
&=\frac{1}{\sqrt{2}}\left(1-\frac{1}{d(d-1)} + \frac{1}{d(d-1)}\frac{1+\eps}{\sqrt{1+\eps^2}}\right)\notag \\
&\geq \frac{1}{\sqrt{2}}\left(1+ \frac{1}{d(d-1)}\frac{\eps-\eps^2}{1+\eps^2}\right)>\frac{1}{\sqrt{2}} = \E_y\left(\| U^\star y\|_\infty \right).\notag
\end{align}
\end{example}
From the above example we see that, in order to have a local maximum of \eqref{expcrit} with $S=1$ at the original dictionary, we need our signals to be truly 1-sparse, that is, we need to have a decay between the first and the second largest coefficient. In the following sections we will study how large this decay should be to have a local maximum exactly at or near to the generating dictionary for more general dictionaries and sparsity levels.

\subsection{Exact Recovery}\label{sec:exact_rec}
To warm up we first provide an asymptotic exact dictionary identification result for~\eqref{expcrit} for incoherent dictionaries in the noiseless setting.
\begin{theorem}\label{th:exactasympt}
Let $\dico$ be a unit norm frame with frame constants $A\leq B$ and coherence $\mu$.
Let the coefficients $x$ be drawn from a symmetric probability distribution $\nu$ on the unit sphere $S^{\natoms-1} \subset \R^\natoms$ and assume that the signals are generated as $y=\dico x$.  If there exists $\beta>0$ such that
for $\amp_1(x)\geq \amp_2(x) \geq \ldots \geq \amp_K(x)\geq 0$ the non-increasing rearrangement of the absolute values of the components of $x$ we have
$ \amp_S(x)-\amp_{S+1}(x) - 2\mu \|x\|_1 \geq \beta $ almost surely, that is
\begin{align}\label{exactSdecaycond}
\nu \left(\amp_S(x)-\amp_{S+1}(x) - 2\mu \|x\|_1 \geq \beta\right) = 1,
\end{align} 
then there is a local maximum of \eqref{expcrit} at $\dico$. \\
Moreover for $\pdico \neq \dico$ we have $\E_y \left( \max_{|I| = S} \| \pdico_I^\star y\|_1 \right) < \E_y \left( \max_{|I| = S} \| \dico_I^\star y\|_1 \right) $ as soon as 
\begin{align}\label{epscondexactcont}
 \eps < \frac{\beta}{1 + 3\sqrt{ \log\left( \frac{25 K^2 S \sqrt{B}}{\beta(\bar c_1+ \ldots +\bar c_S)}\right) } },
 \end{align}
 where $\bar c_i :=\E_x (c_i(x))$.
\end{theorem}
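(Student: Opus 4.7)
The plan is to use the gap condition \eqref{exactSdecaycond} to reduce the inner maximisation over $|I|=S$ to the deterministic index set $I^*(x)$, the support of the $S$ largest entries of $|x|$, and then to exploit the sign-symmetry of $\nu$ to kill any first-order orthogonal perturbation term in expectation. Writing each perturbed atom as $\patom_i = \alpha_i \atom_i + \omega_i z_i$ with $z_i \perp \atom_i$, $\|z_i\|_2 = 1$, $\alpha_i \geq 1 - \eps^2/2$, and $\omega_i \leq \eps$, gives a clean handle on the two ingredients.

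First I would verify the reduction. Since $\ip{\atom_i}{y} = x_i + e_i$ with $|e_i|\leq \mu \|x\|_1$, the hypothesis $\amp_S - \amp_{S+1} - 2\mu\|x\|_1 \geq \beta$ a.s.\ yields a gap of at least $\beta$ between $|\ip{\atom_i}{y}|$ for $i\in I^*(x)$ and $|\ip{\atom_j}{y}|$ for $j\notin I^*(x)$, and fixes $\signop(\ip{\atom_i}{y}) = \signop(x_i)$. The perturbation $\pdico$ shifts each $|\ip{\atom_i}{y}|$ by at most $(1-\alpha_i)|\ip{\atom_i}{y}| + \omega_i |\ip{z_i}{y}|$, so once $\eps$ is small enough to keep this below $\beta/2$ uniformly, both the maximising index set $\hat I(\pdico,y)$ and the signs of $\ip{\patom_i}{y}$ remain $I^*(x)$ and $\signop(x_i)$. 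This step is exactly what fixes the admissible $\eps$: the sharp $\sqrt{\log(\cdot)}$ form in \eqref{epscondexactcont}, as opposed to the crude $\eps \lesssim \beta/\sqrt{B}$ obtained from the deterministic bound $|\ip{z_i}{y}|\leq \sqrt{B}$, arises by bounding $|\ip{z_i}{y}|$ via concentration --- note $\E\|y\|_2^2 = 1$ while the deterministic bound is $\sqrt{B}$ --- and is the main technical effort.

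With the reduction in hand, for $i\in I^*(x)$ one writes
\[
|\ip{\patom_i}{y}| = \alpha_i |\ip{\atom_i}{y}| + \omega_i \signop(x_i)\ip{z_i}{y}.
\]
Conditioning on $|x|$ (on which $I^*(x)$ depends only through magnitudes), the sign-symmetry of $\nu$ gives $\E[\signop(x_i)x_j \mid |x|] = |x_i|\,\delta_{ij}$, hence $\E[\signop(x_i)\ip{z_i}{y} \mid |x|] = |x_i|\ip{z_i}{\atom_i} = 0$ because $z_i \perp \atom_i$. The cross term therefore drops out and
\[
\E_y\bigl[\max_{|I|=S}\|\pdico_I^\star y\|_1\bigr] = \sum_i \alpha_i\,\E_x\bigl[\,|\ip{\atom_i}{y}|\,\I(i \in I^*(x))\bigr].
\]
Since $\alpha_i \leq 1$ with equality iff $\patom_i = \atom_i$ (within the $\eps$-ball atoms cannot flip sign), and permutation-symmetry of $\nu$ ensures $\E_x[|\ip{\atom_i}{y}|\,\I(i\in I^*(x))]>0$ for every $i$, the right-hand side is strictly less than its value at $\pdico = \dico$ as soon as some $\alpha_i < 1$, i.e.\ whenever $\pdico \neq \dico$ inside the $\eps$-ball of \eqref{epscondexactcont}. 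This simultaneously establishes the local maximum at $\dico$ and the claimed strict dominance.
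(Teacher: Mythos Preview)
Your argument has a genuine gap at the junction between the concentration step and the sign-symmetry step, and this gap is precisely where the quantitative content of \eqref{epscondexactcont} lives.

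You correctly note that the deterministic bound $|\ip{z_i}{y}|\le\sqrt{B}$ only yields the crude range $\eps\lesssim\beta/\sqrt{B}$, and that the sharper $\sqrt{\log}$ form must come from concentration of $|\ip{z_i}{y}|$. But concentration gives you the reduction $\hat I(\pdico,y)=I^*(x)$ and $\signop\ip{\patom_i}{y}=\signop(x_i)$ only on a high-probability event $E$, not almost surely. Your identity
\[
|\ip{\patom_i}{y}|=\alpha_i|\ip{\atom_i}{y}|+\omega_i\,\signop(x_i)\ip{z_i}{y}
\]
and the subsequent cancellation $\E[\signop(x_i)\ip{z_i}{y}\mid |x|]=0$ are only valid \emph{unconditionally}. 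Once you restrict to $E$, the conditioning destroys the sign symmetry (the event $E$ itself depends on the signs through $\ip{z_i}{y}$), so the cross term no longer integrates to zero. And you never control the contribution from $E^c$: on that set the perturbed response can exceed the original one by as much as $2\eps S\sqrt{B}$, and the whole quantitative result hinges on balancing this against the $O(\eps^2\gamma/K)$ gain from the good event. As written, your argument therefore only proves the theorem in the crude range $\eps\lesssim\beta/\sqrt{B}$.

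The paper closes exactly this gap: it writes
\[
\E_\sigma\Big[\max_{|I|=S}\|\pdico_I^\star y\|_1\Big]=\E_\sigma\big[\|\pdico_{I_p}^\star y\|_1\big]+\sum_{\sigma\in\Sigma_p}\P(\sigma)\Big(\max_{|I|=S}\|\pdico_I^\star y\|_1-\|\pdico_{I_p}^\star y\|_1\Big),
\]
computes the first term as an \emph{unconditional} expectation (this is where the sign symmetry is legitimately used, and requires only the milder positivity condition $\eps\sqrt{B}<\alpha_i c_S-\mu\|c\|_1$, not full index preservation), and bounds the second term by $\P(\Sigma_p)\cdot 2\eps S\sqrt{B}$ with $\P(\Sigma_p)\le 2K\exp(-s^2/2\eps^2)$ via Hoeffding. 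The stated threshold \eqref{epscondexactcont} emerges from the inequality
\[
\frac{\eps^2(\bar c_1+\ldots+\bar c_S)}{2K}>4\eps SK\sqrt{B}\exp\Big(-\frac{(\beta-\eps^2/2)^2}{8\eps^2}\Big),
\]
which is the balance you allude to but never carry out.
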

\begin{proofsketch}We briefly sketch the main ideas of the proof, which are the same as for the corresponding theorem for the K-SVD principle in \cite{sc14}. For self-containedness of the paper the full proof is included in Appendix~\ref{app:exactsympt}. \\
Assume that we have the case of a simple probability measure based on one sequence c, that is $x=c_{p,\sigma}$.
For any fixed permutation $p$ the condition in~\eqref{exactSdecaycond} ensures that for all sign sequences $\sigma$, and consequently all signals, the maximal S responses of the original dictionary $\dico$ are attained at $I_p = p^{-1}\left(\{1\ldots S\}\right)$ and that there is a gap of size $\beta$ to the remaining responses.\\
For an $\eps$-perturbation of the generating dictionary we have $\patom_i \approx (1-\eps_i^2/2) \atom_i +\eps_i z_i$ for some unit vectors $z_i$ with $\ip{z_i}{\atom_i}=0$ and $\eps_i\leq\eps$. Now for most sign sequences the contribution of $\eps_i z_i$ to the response $\ip{\patom_i}{\dico c_{p,\sigma} }$ will be smaller than $\beta/2$ so the maximal S responses will still be attained at $I_p$. Comparing the loss of the perturbed dictionary over the typical sign sequences of all permutations, which scales as $\frac{(c_1 + \ldots+ c_S)}{2K} \sum \eps_i^2$, to the maximal gain $S\eps \sqrt{B}$ over the approximately $2\sum_i \exp\left(-\beta^2/\eps_i^2\right)$ atypical sign sequences shows that there is a maximum at the original dictionary. The general result follows from an integration argument.
\end{proofsketch}
As already mentioned, while for the K-SVD criterion~\eqref{ksvdcrit} there is always an optimum at the generating dictionary if all training signals are $S$-sparse, this it is not obvious for the response principle. Indeed, in the special case where all the training signals are exactly $S$-sparse, $c_{S+1}(x)=0$ almost surely, we get an additional condition to ensure asymptotic recoverability,
\begin{align}
\amp_S(x) - 2\mu \sum_{s=1}^{S} c_s(x) \geq \beta > 0, \qquad \mbox{almost surely.}\notag
\end{align}
To get a better feeling for this constraint we bound the sum over the S largest reponses by S times the largest response, $ \sum_{s=1}^{S} c_s(x) \leq S c_1(x)$ and arrive at the condition
\begin{align}\label{exactSsparsecond}
\frac{\amp_S(x)}{\amp_1(x)} \gtrsim 2 S \mu,
\end{align}
which is the classical condition under which simple thresholding will find the support of an exactly $S$-sparse signal, compare for instance \citet{scva08}. 


\subsection{Stability under Coherence and Noise}
While giving a first insight into the identification properties of the response principle, Theorem~\ref{th:exactasympt} suffers from two main limitations.\\
First, the required condition on the coherence of the dictionary with respect to the decay of the coefficients, $c_S(x)-c_{S+1}(x)- 2\mu\|c(x)\|_1>0$, is unfortunately quite strict. In the most favourable case of exactly $S$ sparse signals with equally sized coefficients, $c_1(x)=c_S(x)=1/\sqrt{S}$, we see from \eqref{exactSsparsecond} that we can only identify dictionaries from very sparse signals, where $S=\lesssim \mu^{-1}$. In case of very incoherent dictionaries with
$\mu = O(1/\sqrt{d})$ this means that $S \lesssim \sqrt{d}$. 
However, for most sign sequences $\sigma$ we have 
\begin{align}
|\langle \atom_{i}, \dico \amp_{p,\sigma} \rangle|&= \Big|  \sigma_{i}  \amp_{p(i)} + \sum_{j\neq i} \sigma_j \amp_{p(j)}\langle \atom_{i}, \atom_j  \rangle \Big|
 \approx \amp_{p(i)} \pm  \left( \sum_{j\neq i} \amp^2_{p(j)} |\ip{ \atom_{i}}{ \atom_j} |^2 \right)^{1/2} \approx  \amp_{p(i)} \pm \mu,\notag
\end{align}
which indicates that a condition of the form $\mu \lesssim c_S-c_{S+1}$ may be strong enough to guarantee (approximate) recoverability of the dictionary. Assuming again the most favourable case of equally sized coefficients, we could therefore identify dictionaries from signals with sparsity levels of the order $S\lesssim \mu^{-2}$, which, in case of incoherent dictionaries, means of the order of the ambient dimension $S \lesssim d$. \\ 
The second limitation of Theorem~\ref{th:exactasympt} is that, even if it allows for not exact S-sparseness of the signals, it does not take into account noise. 
Our next goal is therefore to extend the exact identification result in Theorem~\ref{th:exactasympt} to a stable identification result for less sparse (larger S) and noisy signals. For this task we first need to amend our signal model to incorporate noise. We would like to consider unbounded white noise, but also keep the property that in expectation the signals have unit square norm.
Further for the next section, where we want to transform our asymptotic identification results to results for finite sample sizes, it will be convenient if our signals are bounded. These considerations lead to the following model:
\begin{align}\label{noisymodel}
y=\frac{ \dico x +\noise}{\sqrt{1+\|\noise \|_2^2}},
\end{align}
where $\noise=(\noise(1) \ldots \noise(d))$ is a centred random subgaussian vector with parameter $\nsigma$. That is, the entries $\noise(i)$ are independent and satisfy $\E (e^{t \cdot\noise(i)}) \leq e^{t^2 \nsigma^2/2}$. \\
\noindent Employing this noisy signal model and formalising the ideas about the typical gap size between responses of the generating dictionary inside and outside the true support, leads to the following theorem.
\begin{theorem} \label{th:stableasympt}
Let $\dico$ be a unit norm frame with frame constants $A\leq B$ and coherence $\mu$.
Let the coefficients $x$ be drawn from a symmetric probability distribution $\nu$ on the unit sphere $S^{\natoms-1} \subset \R^\natoms$. Further let $\noise=(\noise(1) \ldots \noise(d))$ be a centred random subgaussian noise-vector with parameter $\nsigma$ and assume that the signals are generated according to the noisy signal model in~\eqref{noisymodel}.
If there exists $\beta>0$ such that
for $\amp_1(x)\geq \amp_2(x) \geq \ldots \geq \amp_K(x)\geq 0$ the non-increasing rearrangement of the absolute values of the components of $x$ we have
$ \amp_S(x)-\amp_{S+1}(x) \geq \beta $ almost surely 
and
\begin{align}\label{munoisecondcont}
\max\{\mu,\nsigma\} \leq \frac{\beta}{ \sqrt{ 72(\log a + \log\log a)} } \quad \mbox{for} \quad a= \frac{112 K^2 S (\sqrt{B}+1)}{C_\noise \beta (\bar c_1+ \ldots +\bar c_S)},
\end{align}
where $C_\noise= \E_\noise \left((1+\|\noise\|_2^2)^{-1/2}\right)$ and $\bar c_i := \E_x (\amp_i(x))$, then there is a local maximum of \eqref{expcrit} at $\tilde{\pdico}$ satisfying, 
\begin{align}
d(\tilde{\pdico},\dico)  \leq \frac{12 SK^2 \sqrt{B}}{C_\noise(\bar c_1+ \ldots +\bar c_S)}  \exp\left( \frac{-\beta^2}{72\max\{\mu^2,\nsigma^2\}} \right).
\end{align} 
\end{theorem}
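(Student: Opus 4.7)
The plan is to follow the strategy of Theorem~\ref{th:exactasympt} with two essential upgrades: replace the deterministic coherence bound $\sum_{j\neq i}|\sigma_j\amp_{p(j)}\ip{\atom_i}{\atom_j}|\leq\mu\|\amp\|_1$ by a Hoeffding-type sign-concentration that sees only $\mu$ (not $\mu\|\amp\|_1$), and use the subgaussian tail of $\noise$ to control the noise cross-terms at scale $\nsigma$. These two inputs will produce the ``bad-event'' probability $\exp(-\beta^2/(72\max\{\mu^2,\nsigma^2\}))$ appearing in the statement, which in turn sets the radius of the ball containing the local maximum $\tilde\pdico$.

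First I would reduce to the simple symmetric case $x=\amp_{p,\sigma}$ by the integration argument of Theorem~\ref{th:exactasympt}, and parametrise any $\eps$-perturbation as $\patom_i=\alpha_i\atom_i+\omega_i z_i$ with $z_i\perp\atom_i$, $\|z_i\|_2=1$, $\alpha_i^2+\omega_i^2=1$, so that $\omega_i\leq\eps$ and $1-\alpha_i\geq\omega_i^2/2$. Setting $I_p:=p^{-1}(\{1,\ldots,S\})$, I would define the good event $G$ as the set of $(\sigma,\noise)$ on which, for every atom $i$, each of three non-dominant contributions to $\patom_i^\star y$ --- the coherence cross-term $\sum_{j\neq i}\sigma_j\amp_{p(j)}\ip{\atom_i}{\atom_j}$, the noise cross-terms $\ip{\atom_i}{\noise}/\sqrt{1+\|\noise\|_2^2}$ and $\ip{z_i}{\noise}/\sqrt{1+\|\noise\|_2^2}$, and the perturbation cross-term $\omega_i\ip{z_i}{\dico\amp_{p,\sigma}}$ --- is less than $\beta/6$ in modulus. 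The gap $\amp_S-\amp_{S+1}\geq\beta$ then forces the maximising support of both $\dico$ and $\pdico$ on $G$ to be exactly $I_p$, with optimal signs coinciding with $\sigma$. Applying Hoeffding to the sign-dependent cross-term (variance proxy $\mu^2\sum_{j\neq i}\amp_{p(j)}^2\leq\mu^2$) and the subgaussian tail to the noise inner products (parameter $\nsigma$), followed by a union bound over the $3K$ deviation sources, yields
\begin{align}
\P(G^c)\lesssim K\exp\!\left(-\frac{\beta^2}{72\max\{\mu^2,\nsigma^2\}}\right),\notag
\end{align}
where the $72=2\cdot 6^2$ arises from splitting $\beta/2$ into three equal pieces and applying the Chernoff exponents $t^2/(2\mu^2)$ and $t^2/(2\nsigma^2)$; the hypothesis $\max\{\mu,\nsigma\}\leq\beta/\sqrt{72(\log a+\log\log a)}$ then ensures $\P(G^c)$ is small enough relative to all prefactors in play.

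On $G$ the sign agreement gives $\|\pdico_{I_p}^\star y\|_1-\|\dico_{I_p}^\star y\|_1=\sum_{i\in I_p}\bigl[(\alpha_i-1)|\atom_i^\star y|+\omega_i\sigma_i z_i^\star y\bigr]$. A direct computation shows $\E_{\sigma,\noise}[\sigma_i z_i^\star y]=0$ unconditionally, because cross-signs annihilate and $z_i\perp\atom_i$ kills the diagonal contribution, so restricting to $G$ only costs the cross-term an $O(\omega_i\P(G^c))$ correction --- of the same exponential scale as the bad-event budget. After taking expectations and averaging over $p$, the remaining first term produces a negative drift of size at least $\tfrac{C_\noise(\bar\amp_1+\ldots+\bar\amp_S)}{2K}\sum_i\omega_i^2\geq\tfrac{C_\noise(\bar\amp_1+\ldots+\bar\amp_S)}{2K}\eps^2$, using $\max_i\omega_i=\eps$. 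On $G^c$, the gain $\max_I\|\pdico_I^\star y\|_1-\max_I\|\dico_I^\star y\|_1$ is bounded by picking $I^\ast=\argmax_I\|\pdico_I^\star y\|_1$, using $\max_I\|\dico_I^\star y\|_1\geq\|\dico_{I^\ast}^\star y\|_1$ and the reverse triangle inequality to obtain $\sum_{i\in I^\ast}|(\patom_i-\atom_i)^\star y|\leq S\eps\|y\|_2\leq S\eps(\sqrt B+1)$ (since the noise model gives $\|y\|_2\leq(\sqrt B+\|\noise\|_2)/\sqrt{1+\|\noise\|_2^2}\leq\sqrt B+1$). Multiplying by $\P(G^c)$ yields a bad-event gain bounded by $SK\eps(\sqrt B+1)\exp(-\beta^2/72\max\{\mu^2,\nsigma^2\})$. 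Forcing loss $\geq$ gain recovers the stated radius, and continuity of the criterion plus compactness of the closed ball around $\dico$ finally place a local maximum $\tilde\pdico$ strictly inside.

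The main obstacle will be pinning down the sharp constant $72$ in the exponent and the precise prefactor $12$ in the radius: this requires a careful simultaneous split of the budget $\beta$ among the coherence, noise and perturbation deviations, and a coupling of Hoeffding concentration for signs with subgaussian concentration for $\noise$ that respects the nonlinear normalisation $1/\sqrt{1+\|\noise\|_2^2}$, which couples $\noise$ with itself. A secondary technical hurdle is verifying that restricting the cross term $\omega_i\sigma_i z_i^\star y$ to $G$ truly costs only $\P(G^c)$-sized corrections, rather than $\sqrt{\P(G^c)}$ from a naive Cauchy--Schwarz, so that the leading bias lives at the scale $\exp(-\beta^2/(72\max\{\mu^2,\nsigma^2\}))$ predicted by the theorem and not its square root.
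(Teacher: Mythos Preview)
Your approach matches the paper's proof closely: reduce to a single sequence $\amp$, parametrise perturbations as $\patom_i=\alpha_i\atom_i+\omega_i z_i$, define a good event on which both $\dico$ and $\pdico$ select $I_p$, use Hoeffding for sign-sums and subgaussian tails for noise, then compare the $O(\eps^2)$ loss to the $O(\eps\cdot\P(G^c))$ gain. However, your bound $\P(G^c)\lesssim K\exp\bigl(-\beta^2/(72\max\{\mu^2,\nsigma^2\})\bigr)$ is incomplete. You explicitly include the constraint $\omega_i|\ip{z_i}{\dico\amp_{p,\sigma}}|<\beta/6$ in $G$---and you must, otherwise there is no guarantee that $\pdico$'s maximising support equals $I_p$ on $G$---but Hoeffding applied to this term has variance proxy $\omega_i^2\sum_j\amp_{p(j)}^2\ip{z_i}{\atom_j}^2\leq\omega_i^2\leq\eps^2$, contributing an additional term of order $K\exp\bigl(-\beta^2/(72\eps^2)\bigr)$ to $\P(G^c)$ that depends on $\eps$, not on $\mu$ or $\nsigma$.

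This extra term is not a nuisance to be absorbed but the source of the upper radius $\epsmax$. With it, the loss-vs-gain inequality becomes roughly
\begin{align}
\eps \gtrsim \frac{SK^2(\sqrt B+1)}{C_\noise(\bar c_1+\ldots+\bar c_S)}\left[\exp\left(\frac{-\beta^2}{72\max\{\mu^2,\nsigma^2\}}\right)+\exp\left(\frac{-\beta^2}{72\eps^2}\right)\right];\notag
\end{align}
the first bracket defines $\epsmin$ (the stated radius) while the second, solved implicitly in $\eps$, yields $\epsmax\approx\beta/(9\sqrt{\log a})$. Your compactness argument then needs the non-empty annulus $\epsmin<\eps\leq\epsmax$ to place the local maximum inside the $\epsmin$-ball, and hypothesis~\eqref{munoisecondcont} is precisely the condition $\epsmin<\epsmax$. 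Without the $\eps$-dependent piece your inequality would assert that the criterion at $\dico$ beats the criterion at $\pdico$ for \emph{every} $\eps>\epsmin$, which is false for $\eps$ of order one. Once you carry this term through, your argument becomes the paper's; your mean-zero-minus-$G^c$ device for the cross term $\omega_i\sigma_i z_i^\star y$ is a valid alternative to the paper's sign-symmetry of the bad set and does avoid the $\sqrt{\P(G^c)}$ loss you worried about, since $|z_i^\star y|\leq\sqrt B+1$ is bounded.
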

\begin{proofsketch} As outlined at the beginning of the section the main ingredient we have to add to the proof idea of Theorem~\ref{th:exactasympt} is a probabilistic argument to substitute the condition guaranteeing that the S largest responses of the generating dictionary are $I_p$. Due to concentration of measure we get that for most sign sequences, and therefore most signals, the maximum is still attained at $I_p$. Moreover the gap to the remaining responses is actually large enough to accommodate relatively high levels of noise and/or perturbations. \\
The detailed proof can be found in Appendix~\ref{app:stableasympt}.
\end{proofsketch}
Let us make some observations about the last result.\\
First, we want to point out that for sub-gaussian noise with parameter $\nsigma$, the quantity $C_\noise= \E_\noise \left((1+\|\noise\|_2^2)^{-1/2}\right)$ in the statement above is well behaved. If for example the $r(i)$ are iid Bernoulli-variables, that is $P(r(i)= \pm \nsigma)=\frac{1}{2}$, we have $C_\noise=(1+d\nsigma^2)^{-1/2}$. In general we have the following estimate due for instance to Theorem~1 in \cite{hskazh11}. Since we have
\begin{align}
\P \left(\|x\|^2_2\geq \nsigma^2 (\ddim + 2 \sqrt{\ddim t } + 2t\right)\leq e^{-t},\notag
\end{align}
setting $t=d$, we get $ \P \left(\|x\|_2^2\geq 5\ddim \nsigma^2 \right)\leq e^{-d}$, which leads to
\begin{align}
\E_\noise \left(\frac{1}{\sqrt{1+\|\noise\|_2^2}} \right) \geq \frac{(1-e^{-d}) }{\sqrt{1+5\ddim \nsigma^2}}.\notag
\end{align}
Also to illustrate the result we again specialise it to the most favourable case of exactly $S$-sparse signals with balanced coefficients, that is $c_S(x)= S^{-1/2}$. Assuming white Gaussian noise with variance $\nsigma_G^2$ we see that identification is possible even for expected signal to noise ratios of the order $O(\frac{S}{d})$, that is 
$$\frac{\E (\|\dico x\|_2^2)}{\E(\|r\|_2^2)}\gtrsim \frac{S}{d}.$$
Similarly, by specialising Theorem~\ref{th:stableasympt} to the case of exactly $S$-sparse and noiseless signals we get - to the best of our knowledge - the first result establishing that locally it is possible to stably identify dictionaries from signals with sparsity levels beyond the spark of the generating dictionary. Indeed, even if some of the $S$-sparse signals could have representations in $\dico$ that require less than $S$ atoms, there will still be a local maximum of the asymptotic criterion close to the original dictionary as long as the smallest coefficient of each signal is of the order $O(\mu)$, which in the most favourable case means that we can have $S\lesssim \mu^{-2}$ or $S\lesssim d$. The quality of this result is on a par with the best results for finding sparse approximations in a \emph{given} dictionary, which say that on average Basis Pursuit or thresholding can find the correct sparse support even for signals with sparsity levels of the order of the ambient dimension \citep{tr08, scva07}.\\
Next note that with the available tools it would be possible to consider also a signal model where a small fraction of the coefficients violates the decay condition $\amp_S(x)-\amp_{S+1}(x) \geq \beta$ and still have stability. However, we leave explorations in that direction to the interested reader and instead turn to the study of the criterion for a finite number of training samples.

\section{Finite Sample Size Results \label{sec:finitesamples}}

In this section we will transform the two asymptotic results from the last section into results for finite sample sizes, 
that is, we will study when $\dico$ is close to a local maximum of
\begin{align}\label{finitecrit}
\max_{\pdico \in \mathcal D}\, \frac{1}{N} \sum_{n=1}^N \max_{|I| = S} \| \pdico_I^\star y_n\|_1,
\end{align} 
assuming that the $y_n$ are following either the noise-free or the noisy signal model.
For convenience we will do the analysis for the normalised version~\eqref{finitecrit} of the $S$-response criterion~\eqref{thecriterion}.
\begin{theorem}\label{th:exactfinite}
Let $\dico$ be a unit norm frame with frame constants $A\leq B$ and coherence $\mu$.
Let the coefficients $x_{n}$ be drawn from a symmetric probability distribution $\nu$ on the unit sphere $S^{\natoms-1} \subset \R^\natoms$ and assume that the signals are generated as $y_{n}=\dico x_{n}$.  If there exists $\beta>0$ such that
for $\amp_1(x_n)\geq \amp_2(x_n) \geq \ldots \geq \amp_K(x_n)\geq 0$ the non-increasing rearrangement of the absolute values of the components of $x_n$ we have
$ \amp_S(x_n)-\amp_{S+1}(x_n) - 2\mu \|x_n\|_1 \geq \beta $ almost surely
and the target precision $\tilde\eps$ satisfies
\begin{align}
\tilde \eps  \leq \frac{\beta}{1+ 3 \sqrt{ \log\left( \frac{50 K^2 S \sqrt{B}}{\beta (\bar c_1+ \ldots +\bar c_S)}\right) } },\notag
\end{align}
where $\bar c_i := \E_{x_n} (\amp_i(x_n))$, then except with probability,
\begin{align}
 2\exp\left(- \frac{N \tilde\eps^2 (\bar c_1+ \ldots +\bar c_S)^2}{129 S^2 K^2 B} + Kd \log  \left( \frac{25 SK \sqrt{B} }{\tilde\eps(\bar c_1+ \ldots +\bar c_S)} \right)\right),\notag
\end{align}
there is a local maximum of \eqref{finitecrit} respectively \eqref{thecriterion} at $\tilde{\pdico}$ satisfying, 
\begin{align}
d(\tilde{\pdico}, \dico) \leq  \tilde \eps +\frac{ \tilde \eps^2}{4K}.\notag
\end{align} 
\end{theorem}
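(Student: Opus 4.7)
The plan is to show that, with high probability, the empirical criterion $f_N(\pdico):=\frac{1}{N}\sum_n \max_{|I|=S}\|\pdico_I^\star y_n\|_1$ stays uniformly close to its expectation $f(\pdico):=\E_y\,\max_{|I|=S}\|\pdico_I^\star y\|_1$ on the closed ball $\overline B:=\{\pdico\in\dicoset:d(\pdico,\dico)\leq\tilde\eps\}$, with deviation strictly smaller than half the asymptotic gap $\Delta(\tilde\eps):=\inf_{d(\pdico,\dico)=\tilde\eps}[f(\dico)-f(\pdico)]$ guaranteed by Theorem~\ref{th:exactasympt}. Once such uniform control is in hand, $f_N(\dico)>f_N(\pdico)$ on $\partial\overline B$, so $f_N$ attains its maximum over $\overline B$ at an interior point, which is the local maximiser of~\eqref{finitecrit} promised in the statement.

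The first task is to extract a quantitative version of the asymptotic statement. Following the proof sketch of Theorem~\ref{th:exactasympt}, the dominant term in $f(\dico)-f(\pdico)$ is the quadratic loss $\tfrac{\bar c_1+\cdots+\bar c_S}{2K}\sum_i \eps_i^2$ collected from typical sign sequences, where $\eps_i:=\|\patom_i-\atom_i\|$, while the possible gain from atypical sign sequences is exponentially small in $\beta^2/\eps_i^2$ and therefore negligible under the smallness assumption in~\eqref{epscondexactcont}. Minimising this over $\partial\overline B$ (worst case: a single atom perturbed by $\tilde\eps$) gives $\Delta(\tilde\eps)\gtrsim \tilde\eps^2(\bar c_1+\cdots+\bar c_S)/K$.

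The second task is a covering-plus-Hoeffding argument, with one key twist. Although the per-signal quantity $Z(\pdico;y):=\max_{|I|=S}\|\pdico_I^\star y\|_1$ is only bounded by $S\sqrt{B}$, the centred difference $Z(\pdico;y)-Z(\dico;y)$ is Lipschitz in $\pdico$ around $\dico$ with constant $S\|y\|_2\le S\sqrt{B}$, and hence bounded by $S\tilde\eps\sqrt{B}$ on $\overline B$. This smaller range is what drives the sharper exponent. Applying Hoeffding's inequality to $\frac1N\sum_n[Z(\pdico;y_n)-Z(\dico;y_n)]-[f(\pdico)-f(\dico)]$ at a single $\pdico$ produces the tail bound $2\exp(-Nt^2/(2S^2\tilde\eps^2 B))$. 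I would then cover $\overline B$ by an $\eta$-net of cardinality at most $(3\tilde\eps/\eta)^{Kd}$ (standard sphere covering atom by atom), union-bound the Hoeffding estimates, and extend the uniform deviation to all of $\overline B$ via the Lipschitz bound at an additive cost of $S\eta\sqrt{B}$.

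Choosing $t\approx\Delta(\tilde\eps)/4$ and $\eta$ small enough that $S\eta\sqrt{B}\leq\Delta(\tilde\eps)/4$ produces precisely the announced exponent $N\tilde\eps^2(\bar c_1+\cdots+\bar c_S)^2/(S^2K^2B)$ and the $Kd\log(\cdot)$ covering term in the probability. The additive $\tilde\eps^2/(4K)$ appearing in the stated distance bound I expect to be a small geometric correction from the interplay between the tangent-direction $\eps$-parametrisation used in Step~1 and the Euclidean chord distance $d(\cdot,\cdot)$ on $\dicoset$. The main obstacle is to carry out Step~1 at the per-atom ($\eps_i$) resolution required to make Step~2 sharp: replacing $\sum_i\eps_i^2$ by the crude $K\tilde\eps^2$ would cost an extra factor of $K$ in the exponent, so the anisotropic loss bound must be matched against $\eps_i\|y\|_2$-scale fluctuations in $Z(\pdico;y)-Z(\dico;y)$ rather than a uniform $\tilde\eps\sqrt{B}$ Lipschitz bound.
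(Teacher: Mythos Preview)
Your approach is essentially the paper's: the Lipschitz bound $|Z(\pdico;y)-Z(\dico;y)|\leq d(\pdico,\dico)\,S\sqrt{B}$, Hoeffding applied to this \emph{centred difference} over a $\delta$-net of the ball, and the quantitative asymptotic gap $f(\dico)-f(\pdico)\geq \tfrac{\bar c_1+\cdots+\bar c_S}{2K}\sum_i\eps_i^2$ (minus exponentially small terms) extracted from the proof of Theorem~\ref{th:exactasympt}. The key observation --- that bounding the difference rather than each term gives Hoeffding range $\tilde\eps S\sqrt{B}$ instead of $S\sqrt{B}$ --- is exactly the mechanism in the paper.

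Two small corrections. First, the additive $\tilde\eps^2/(4K)$ is not a tangent--chord artefact; it is the net resolution $\delta$. The paper takes $\delta$ proportional to $\tilde\eps^2(\bar c_1+\cdots+\bar c_S)/(KS\sqrt{B})\leq\tilde\eps^2/K$, and since a net point $\ppdico$ approximating a boundary point $\pdico$ may sit at radius $\bar\eps$ differing from $\tilde\eps$ by up to $\delta$, this slack is simply absorbed into the final radius. Second, your closing worry about anisotropic per-$\eps_i$ matching is unnecessary. On the sphere $d(\pdico,\dico)=\tilde\eps$ one has the trivial lower bound $\sum_i\eps_i^2\geq\max_i\eps_i^2=\tilde\eps^2$, which already gives $\Delta(\tilde\eps)\gtrsim\tilde\eps^2(\bar c_1+\cdots+\bar c_S)/K$; balanced against the uniform Hoeffding range $\tilde\eps S\sqrt{B}$ this yields precisely the exponent $N\tilde\eps^2(\bar c_1+\cdots+\bar c_S)^2/(S^2K^2B)$ you computed in Step~2. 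No finer atom-by-atom accounting is required.
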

\begin{theorem} \label{th:stablefinite}
Let $\dico$ be a unit norm frame with frame constants $A\leq B$ and coherence $\mu$.
Let the coefficients $x_{n}$ be drawn from a symmetric probability distribution $\nu$ on the unit sphere $S^{\natoms-1} \subset \R^\natoms$. Further let $\noise_{n}=(\noise_n(1) \ldots \noise_n(d))$ be i.i.d. centred random subgaussian noise-vectors with parameter $\nsigma$ and assume that the signals are generated according to the noisy signal model in~\eqref{noisymodel}.  If there exists $\beta>0$ such that
for $\amp_1(x_n)\geq \amp_2(x_n) \geq \ldots \geq \amp_K(x_n)\geq 0$ the non-increasing rearrangement of the absolute values of the components of $x$ we have
$ \amp_S(x_n)-\amp_{S+1}(x_n) \geq \beta $ almost surely 
and if the target precision $\tilde \eps$, the noiseparameter $\nsigma$ and the coherence $\mu$ satisfy
\begin{align}\label{stabletargetprec}
\tilde\eps&\leq \frac{\beta}{\frac{9}{4}+ 9\sqrt{ \log a } } 
\quad \mbox{ and }\\
\max\{\mu,\nsigma\}& \leq \frac{\beta}{ \sqrt{ 72(\log a + \log\log a)} } \quad \mbox{for} \quad a= \frac{150 K^2 S (\sqrt{B}+1)}{C_\noise \beta (\bar c_1+ \ldots +\bar c_S)},\notag
\end{align}
where $C_\noise= \E_{\noise_n} \left((1+\|\noise_n\|_2^2)^{-1/2}\right)$ and $\bar c_i := \E_{x_n} (\amp_i(x_n))$, then except with probability
\begin{align}
 &2\exp\left(- \frac{N \tilde \eps^2_{\mu,\nsigma}(\bar c_1+ \ldots +\bar c_S)^2}{513 C_\noise^2 S^2 K^2  \big(\sqrt{B}+1\big)^2} + Kd \log  \left( \frac{49 SK \big(\sqrt{B}+1\big) }{\eps_{\mu,\nsigma}(\bar c_1+ \ldots +\bar c_S)} \right)\right),\notag\\
\mbox{ where}\qquad &
\tilde \eps_{\mu,\nsigma}= \max \left\{ \tilde \eps, \frac{16  S K^2\big(\sqrt{B}+1\big)}{C_\noise(\bar c_1+ \ldots +\bar c_S)} \exp\left(- \frac{\beta^2}{72\max\{\mu^2,\nsigma^2\}} \right)\right\},\notag
\end{align} 
there is a local maximum of \eqref{finitecrit} respectively \eqref{thecriterion} at $\tilde{\pdico}$, satisfying
\begin{align}
d(\tilde{\pdico}, \dico)\leq \tilde \eps_{\mu,\nsigma} +\frac{\tilde \eps_{\mu,\nsigma}^2}{16K}.\notag
\end{align} 
\end{theorem}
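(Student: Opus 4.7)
The plan is to lift Theorem~\ref{th:stableasympt} to finite $\nsig$ by a uniform concentration argument: I will show that the empirical objective
\begin{align*}
F_\nsig(\pdico):=\tfrac{1}{\nsig}\sum_{n=1}^\nsig \max_{|I|=\sparsity}\|\pdico_I^\star y_n\|_1
\end{align*}
approximates its expectation $F(\pdico)$ uniformly over a ball of radius $\tilde\eps_{\mu,\nsigma}$ around $\dico$, so well that the interior-maximum structure of $F$ established in the proof of Theorem~\ref{th:stableasympt} is inherited by $F_\nsig$. The two-scale quantity $\tilde\eps_{\mu,\nsigma}$ is designed exactly so as to jointly absorb the user-chosen precision $\tilde\eps$ and the asymptotic residual from Theorem~\ref{th:stableasympt}; consequently the asymptotic local maximiser produced there sits comfortably inside the ball $B(\dico,\tilde\eps_{\mu,\nsigma})$, and it suffices to prove that $F_\nsig$ cannot attain its maximum on the ball's boundary.

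The uniform deviation follows from three standard ingredients. The normalisation in~\eqref{noisymodel} yields $\|y_n\|_2\leq\sqrt B+1$, so each summand of $F_\nsig$ is bounded in $[0,\sparsity(\sqrt B+1)]$ and Hoeffding's inequality applies to every fixed dictionary. The integrand is $\sparsity(\sqrt B+1)$-Lipschitz in $\pdico$, since
\begin{align*}
\bigl|\max_{|I|=\sparsity}\|\pdico_I^\star y\|_1-\max_{|I|=\sparsity}\|{\pdico'}^\star_I y\|_1\bigr|\leq \sparsity\,d(\pdico,\pdico')\,\|y\|_2,
\end{align*}
and the space $\dicoset=(S^{\ddim-1})^\natoms$ admits an $\eta$-net of cardinality at most $(3/\eta)^{\natoms\ddim}$. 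Applying Hoeffding on each net point, a union bound, and the Lipschitz extension gives, except with probability $2(3/\eta)^{\natoms\ddim}\exp\bigl(-2\nsig t^2/(\sparsity^2(\sqrt B+1)^2)\bigr)$,
\begin{align*}
\sup_{\pdico\in B(\dico,\tilde\eps_{\mu,\nsigma})}|F_\nsig(\pdico)-F(\pdico)|\leq t+\sparsity(\sqrt B+1)\eta.
\end{align*}
Choosing both $t$ and $\sparsity(\sqrt B+1)\eta$ proportional to $C_\noise\tilde\eps_{\mu,\nsigma}(\bar c_1+\ldots+\bar c_\sparsity)/(\sparsity\natoms)$, with numerical constants tuned to split the budget, reproduces the announced exponent $\nsig\tilde\eps_{\mu,\nsigma}^2(\bar c_1+\ldots+\bar c_\sparsity)^2/(513 C_\noise^2 \sparsity^2 \natoms^2(\sqrt B+1)^2)$ and the logarithmic covering term.

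The final step is to extract from the proof of Theorem~\ref{th:stableasympt} a quantitative \emph{linear} margin: for any $\pdico$ on the sphere $\partial B(\dico,\tilde\eps_{\mu,\nsigma})$, the sign-sequence concentration used there implies $F(\dico)-F(\pdico)\gtrsim C_\noise\tilde\eps_{\mu,\nsigma}(\bar c_1+\ldots+\bar c_\sparsity)/\natoms$, with the linearity in $\tilde\eps_{\mu,\nsigma}$ arising from the $\beta$-sized penalty incurred whenever a perturbation flips a top-$\sparsity$ index of $\pdico^\star y$. It is this linearity that lets the Hoeffding tolerance $t\propto\tilde\eps_{\mu,\nsigma}$ suffice and makes $\tilde\eps_{\mu,\nsigma}$ appear only squared in the probability exponent. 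Provided the uniform deviation above is less than half of this margin, $F_\nsig$ is strictly dominated on the sphere by its value near $\dico$, so its supremum on the closed ball lies in the interior and is a local maximum of $F_\nsig$ on $\dicoset$. The overshoot $\tilde\eps_{\mu,\nsigma}^2/(16\natoms)$ in the stated distance bound is the higher-order correction arising from renormalising the perturbation in the working parametrisation $\patom_i=\alpha_i\atom_i+\omega_i z_i$, $\alpha_i^2+\omega_i^2=1$ inherited from the proof of Theorem~\ref{th:stableasympt}. The main obstacle will be extracting this \emph{deterministic} linear margin uniformly on the sphere from the sign-sequence concentration of Theorem~\ref{th:stableasympt}; the remainder is a careful bookkeeping of constants to match $513$, $49$, and $16$ in the stated bound.
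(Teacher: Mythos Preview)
Your overall architecture (Lipschitz constant, $\eta$-net, Hoeffding, compare to asymptotic margin) is the same as the paper's, but the two places where you quantify things are both off, and together they would not reproduce the stated $\nsig\tilde\eps_{\mu,\nsigma}^2$ exponent.

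First, the asymptotic margin is \emph{quadratic}, not linear. From the proof of Theorem~\ref{th:stableasympt} one gets, for $d(\dico,\pdico)=\bar\eps$,
\begin{align*}
F(\dico)-F(\pdico)\;\geq\;C_\noise\,\frac{\bar c_1+\ldots+\bar c_\sparsity}{2\natoms}\sum_i\eps_i^2
\;-\;\bar\eps\,\sparsity(\sqrt B+1)\bigl(\P(R)+\P(\Sigma_p)\bigr),
\end{align*}
so the dominant term scales like $\bar\eps^2$, coming from the attenuation factor $\alpha_i=1-\eps_i^2/2$ of the unperturbed responses on the good sign/noise set. The ``$\beta$-sized penalty when an index flips'' that you invoke is weighted by an exponentially small probability and is absorbed into the error terms; it does not produce a linear-in-$\bar\eps$ gain.

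Second, because the margin is quadratic, applying Hoeffding to $\max_{|I|=\sparsity}\|\pdico_I^\star y_n\|_1\in[0,\sparsity(\sqrt B+1)]$ as you do is too coarse: you would need $t\asymp\tilde\eps_{\mu,\nsigma}^2$, yielding an exponent $\nsig\tilde\eps_{\mu,\nsigma}^4/(\sparsity^2(\sqrt B+1)^2)$, i.e.\ one power of $\tilde\eps_{\mu,\nsigma}^2$ too many. The paper avoids this by applying Hoeffding to the \emph{difference}
\begin{align*}
X_n:=\max_{|I|=\sparsity}\|\dico_I^\star y_n\|_1-\max_{|I|=\sparsity}\|\pdico_I^\star y_n\|_1,
\end{align*}
which by the very Lipschitz estimate you wrote down satisfies $|X_n|\leq \bar\eps\,\sparsity(\sqrt B+1)$. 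With this $\bar\eps$-dependent range and $t\asymp\tilde\eps_{\mu,\nsigma}^2$, the Hoeffding exponent becomes $\nsig t^2/\bigl(\bar\eps^2\sparsity^2(\sqrt B+1)^2\bigr)\asymp \nsig\tilde\eps_{\mu,\nsigma}^2/(\sparsity^2\natoms^2(\sqrt B+1)^2)$, which is exactly what is stated. The $\delta$-net is accordingly taken only over perturbations of radius at most $\epsmax\approx\tilde\eps_{\mu,\nsigma}$, and $\delta$ is chosen proportional to $\tilde\eps_{\mu,\nsigma}^2$; the $\tilde\eps_{\mu,\nsigma}^2/(16\natoms)$ overshoot in the final distance bound comes from this choice of $\delta$, not from renormalising the parametrisation.
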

\begin{proofsketch}The proofs, which can be found in Appendix~\ref{app:finite}, are based on three ingredients, a Lipschitz property for the mapping $\pdico \rightarrow \frac{1}{N} \sum_{n=1}^N \max_{|I| = S} \| \pdico_I^\star y_n\|_1$ for the respective signal model, the concentration of the sum around its expectation for a $\delta$-net covering the space of all admissible dictionaries close to $\dico$ and a triangle inequality argument to show that the finite sample response differences are close to the expected response differences and therefore larger than 0 for all $\eps\gtrsim \eps_{\mu,\nsigma}$.
\end{proofsketch}
To see better how the sample complexity behaves, we simplify the two theorems to the special case of noiseless exactly $S$-sparse signals with balanced coefficients for various orders of magnitude of $S$.\\
If we have $S= O(1)$, Theorem~\ref{th:exactfinite} implies that in order to have a maximum within radius $\tilde \eps$ to the original dictionary $\dico$ with probability $e^{-Kd}$ we need $N=O\big(K^3d\tilde\eps^{-2}\big)$ samples. Conversely given $N$ training signals we can expect the distance between generating dictionary and closest local maximum to be of the order $O\big(K^2N^{-1/2}\big)$.\\
If we assume a very incoherent dictionary where $\mu=O(d^{-1/2})$ and thus let the sparsity level be of the order $O(\sqrt{d})$ the sample complexity rises to $N=O\big(K^3d^{3/2} \tilde\eps^{-2}\big)$. Taking into account that by \eqref{stabletargetprec} the target precision $\tilde \eps$ needs to be of order $O\left(S^{-1/2}\right)=O\left(d^{-1/4}\right)$ this means that we need at least $N=O\big(K^3d^2\big)$ training signals and once this initial level is reached, the error goes to zero at rate $N^{-1/2}$. \\
For an even lower sparsity level, $S=O(d)$, again assuming a very incoherent dictionary, the sample complexity for target precision $\tilde \eps$ implied now by Theorem~\ref{th:stablefinite} rises to $N=O\big(K^3d^2 \tilde\eps^{-2}\big)$. In this regime, however, we cannot reach arbitrarily small errors by choosing $N$ large enough but only approach the asymptotic precision
$ \tilde \eps_\mu= 16K^2 \sqrt{SB}\exp\left(-d/72S \right)$.\\
Following these promising theoretical results, in the next section we will finally see how theory translates into practice.

\section{Experiments \label{sec:experiments}}

After showing that the optimisation criterion in \eqref{thecriterion} is locally suitable for dictionary identification, in this section we present an iterative thresholding and K means type algorithm (ITKM) to actually find the local maxima of \eqref{thecriterion} and conduct some experiments to illustrate the theoretical results.
We recall that given the input signals $Y=(y_1 \ldots y_N)$ and a fixed sparsity parameter $S$ we want to solve,
\begin{align}
\max_{\pdico \in \dicoset} \sum_n \max_{ |I| = S} \|\pdico_I^\star y_n\|_1.\notag
\end{align} 
Using LaGrange multipliers,
\begin{align*}
&\frac{\partial}{\partial \patom_k}\left(  \sum_n \max_{ |I| = S} \|\pdico_I^\star y_n\|_1 \right)= \sum_{n: k \in I(\pdico,y_n)} \signop(\ip{\patom_k}{y_n}) y_n^\star,\\
&\frac{\partial}{\partial \patom_k}\left(   \|\patom_k\|^2_2 \right)= 2 \patom_k^\star,
\end{align*}
where $I(\pdico,y_n):=\argmax_{ |I| = S} \|\dico_I^\star y_n\|_1$, we arrive at the following update rule,
\begin{align}\label{updaterule}
\patom_k^{new} = \lambda_k \cdot  \sum_{n: k \in I(\pdico^{old},y_n)} \signop(\ip{\patom^{old}_k}{y_n}) y_n,
\end{align}
where $\lambda_k$ is a scaling parameter ensuring that $\|\patom_k^{new}\|_2=1$.\\
In practice, when we do not have an oracle giving us the generating dictionary as initialisation, we also need to safeguard against bad initialisations resulting in a zero-update $\patom_k^{new}=0$. For example we can choose
the zero-updated atom uniformly at random from the unit sphere or from the input signals. \\
Note that finding the sets $ I(\pdico^{old},y_n)$ corresponds to $N$ thresholding operations while updating according to \eqref{updaterule} corresponds to K signed means. Altogether this means that each iteration of ITKM has computational complexity determined by the matrix multiplication $\pdico^\star Y$, meaning $O(dKN)$. This is light in comparison to K-SVD, which even when using thresholding instead of OMP as sparse approximation procedure still requires the calculation of the maximal singular vector of K on average $d\times \frac{N}{K}$ matrices. It is also more computationally efficient than the algorithm for local dictionary refinement, proposed in \cite{argemo13}, which is also based on averaging. Furthermore it is straightforward to derive online or parallelised versions of ITKM.  In an online version for each newly arriving signal $y_n$ we calculate $I(\pdico^{old},y_n)$ using thresholding and update $\patom_k^{new}=\patom_k^{new}+\signop(\ip{\patom^{old}_k}{y_n}) y_n$ for $k \in I(\pdico^{old},y_n)$. After $N$ signals have been processed we renormalise the atoms $\patom_k^{new}$ to have unit norm and set $\pdico^{old}=\pdico^{new}$. Similarly, to parallelise we divide the training samples into $m$ sets of size $\frac{N}{m}$. Then on each node $m$ we learn a dictionary $\pdico^{new}_m$ according to \eqref{updaterule} with $\lambda_k=1$. We then calculate the sum of these dictionaries $\pdico^{new}_0=\sum_m \pdico^{new}_m$ and renormalise the atoms in $\pdico^{new}_0$ to have unit norm.\\ 
Armed with this very simple algorithm we will now conduct four experiments to illustrate our theoretical findings\footnote{A Matlab penknife (mini-toolbox) for playing around with ITKM and reproducing the experiments can be found at \url{http://homepage.uibk.ac.at/~c7021041/ITKM.zip}}.

\begin{table}[htb]
\fbox{\parbox{\textwidth}{{\bf Signal Model}\\
Given the generating dictionary $\dico$ our signal model further depends on four coefficient parameters, 
\vspace{0.5em}

\begin{tabular}{rcl}
$S$ &-& the effective sparsity or number of comparatively large coefficients, \\
$b$ &-& deciding the decay factor of these sparse coefficients,\\
$T$ &-& the total number of non-zero coefficients ($T\geq S$) and \\
$\nsigma$ &-& the noise level.
\end{tabular}
\vspace{0.5em}

Given these parameters we 
choose a decay factor $c_b$ uniformly at random in the interval $[1-b,1]$. We set $c_i = c_b^i/\sqrt{S}$ for $1\leq i\leq S$ and $c_i=0$ for $T<i\leq K$. If $T=S$ we renormalise the sequence to have unit norm, while if $T > S$ we choose the vector $(c_{S+1}, \ldots, c_T)$ uniformly at random on the sphere of radius $R$, where $R$ is chosen such that the resulting sequence $c$ has unit norm.
We then choose a permutation $p$ and a sign sequence $\sigma$ uniformly at random 
and set $y=\dico c_{p,\sigma}$, respectively $y=(\dico c_{p,\sigma}+r)/\sqrt{1+\|r\|_2}$ where $r$ is a Gaussian noise-vector with variance $\nsigma^2$ if $\nsigma>0$.
}}\caption{Signal Model \label{coeffmodel}}
\end{table}

\subsection{ITKM vs. K-SVD}
In our first experiment we compare the local recovery error of ITKM and K-SVD for 3-dimensional bases with increasing condition numbers.\\
The bases are perturbations of the canonical basis $\dico=(e_1,e_2,e_3)$ with the vector $v=(1,1,1)$. That is, $\dico^t=(e^t_1,e^t_2,e^t_3)$, where $e^t_i=(e_i+t v)/\|(e_i+t v)\|_2$ and $t$ varying from 0 to 0.5 in steps of 0.1, which corresponds to condition numbers $\kappa(\dico^t)$ varying from 1 to 2.5. We generate $N=4096$ approximately $1$-sparse noiseless signals from the signal model described in Table~\ref{coeffmodel} with $S=1$, $T=2$, $\nsigma=0$ and $b=0.1/0.2$ and run both ITKM and K-SVD with 1000 iterations, sparsity parameter $S=1$ and the true dictionary (basis) as initialisation. Figure~\ref{fig1}(a) shows the recovery error $d(\dico^t,\tilde{\pdico})$ between the original dictionary and the output of the respective algorithm averaged over 10 runs. \\
As predicted by the theoretical results on the corresponding underlying minimisation principles, the recovery errors of ITKM and K-SVD are roughly the same for $\dico^0$, which is an orthogonal basis and therefore tight. However, while for ITKM the recovery error stays constantly low over all condition numbers, for K-SVD it increases with increasing condition number or non-tightness.

\subsection{Recovery Error and Sample Size}
The next experiment is designed to show how fast the maximiser $\tilde \pdico$ near the original dictionary $\dico$ converges to $\dico$ with increasing sample size $N$.\\
The generating dictionaries consist of the canonical basis in $\R^d$ for $d=4,8,16$ and the first $d/2$ elements of the Hadamard basis and as such are not tight. For every set of parameters $d, S(T), b$ we generate $N$ noiseless signals with $N$ varying from $2^7=128$ to $2^{14}=16 384$ and run ITKM with 1000 iterations, sparsity parameter $S$ equal to the coefficient parameter $S$ and the true dictionary as initialisation. Figure~\ref{fig1}(b) shows the recovery error $d(\dico,\tilde{\pdico})$ between the original dictionary $\dico$ and the output of ITKM $\tilde{\pdico}$ averaged over 10 runs.\\
As predicted by Theorem~\ref{th:exactfinite} the recovery error decays as $N^{-1/2}$. However, the separation of the curves for $d=4,8,16$ and almost exactly sparse signals $(b=0.01)$ by a factor around $\sqrt{2}$ instead of $4$, as suggested by the estimate $\tilde \eps \approx K^2N^{-1/2}$, indicates that the cubic dependence of the sampling complexity on the number of atoms $K$ may be too pessimistic and could be lowered.

 \begin{figure}[thb]
\hspace{-1cm}
\begin{tabular}{cc}
 \includegraphics[width=8cm]{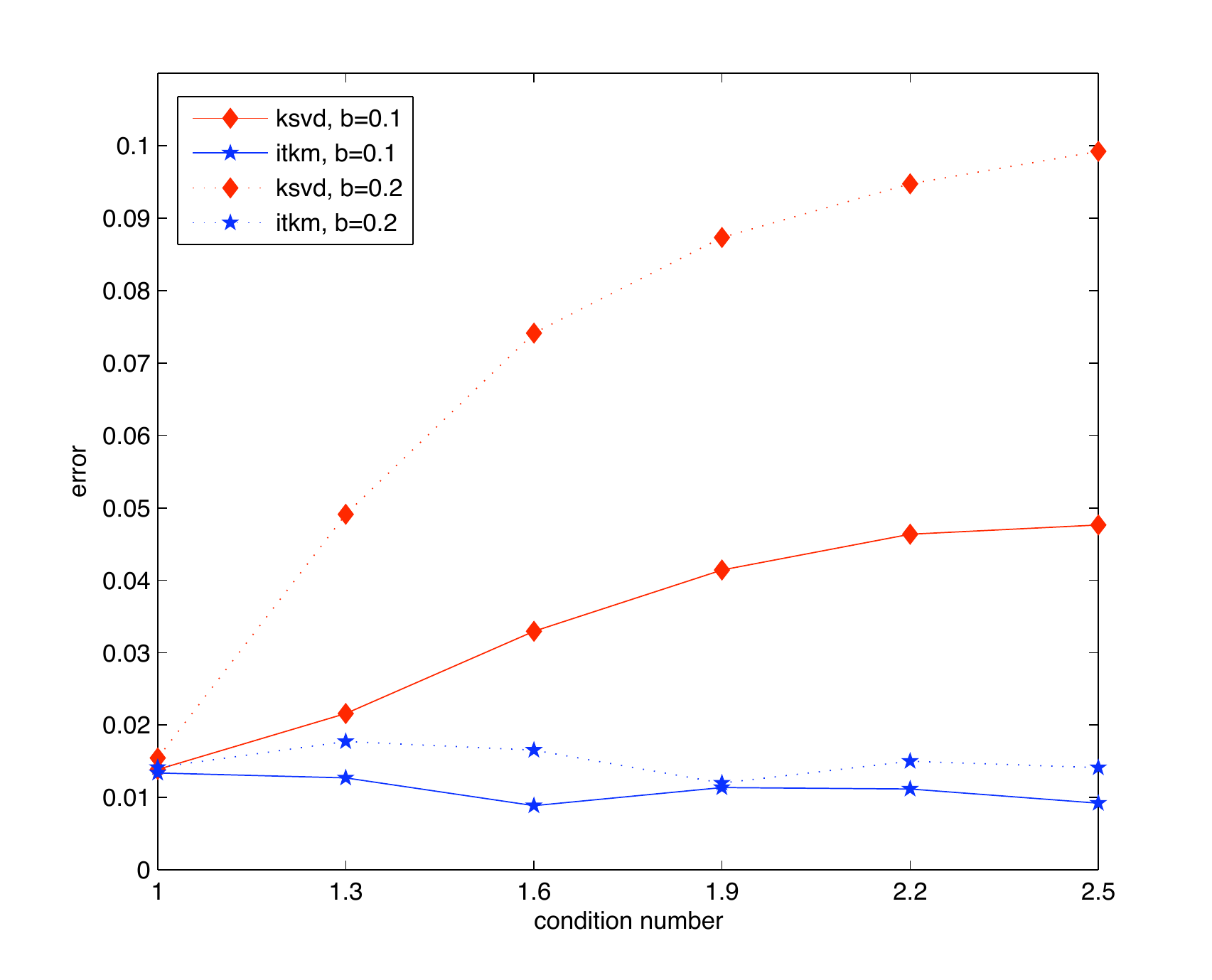} & \includegraphics[width=8cm]{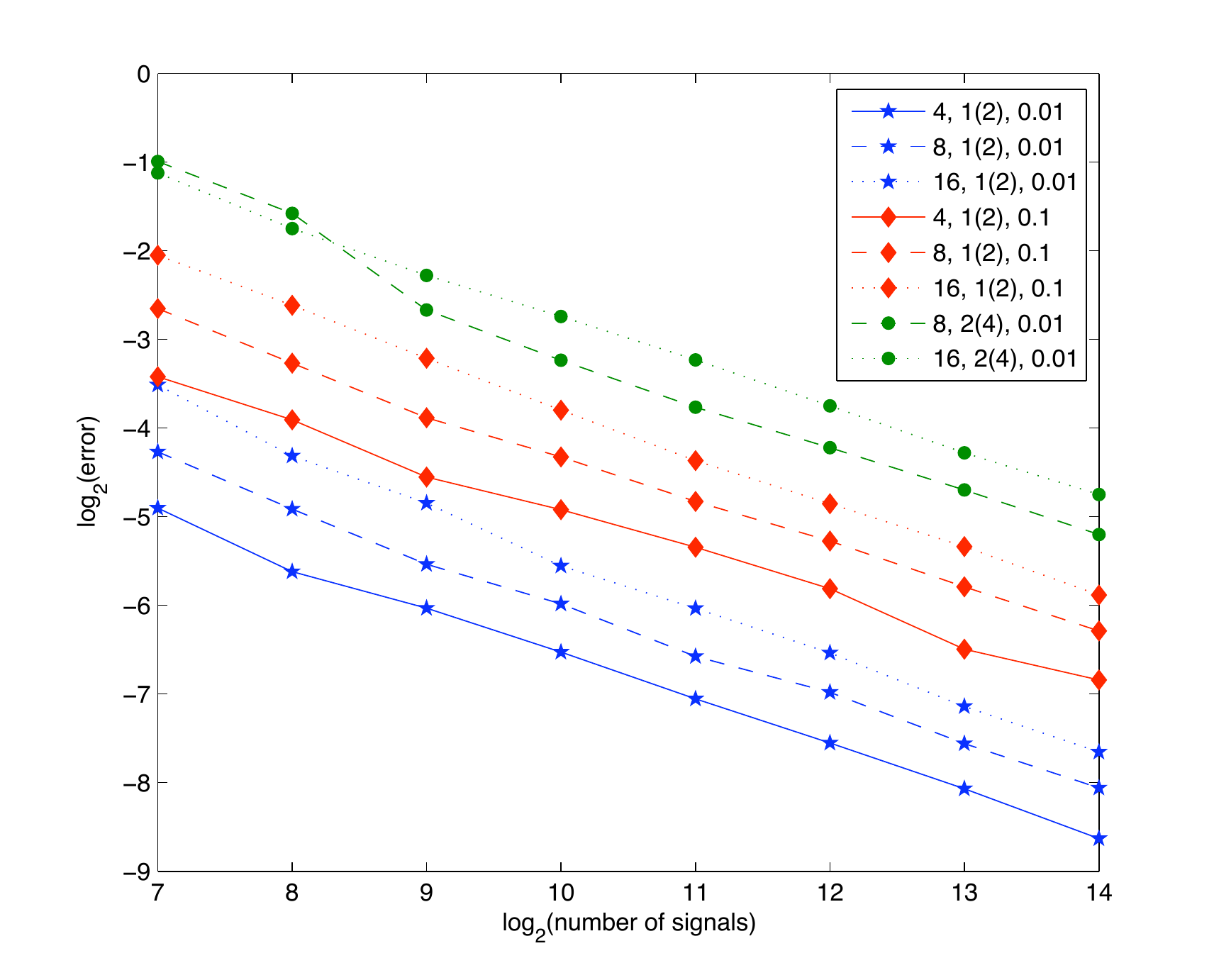}\\
  (a) & (b)
  \end{tabular}
    \caption{(a) Local recovery error of K-SVD and ITKM for two different types of decaying coefficients and
    bases with varying condition numbers in $\R^3$, (b) Decay of recovery error of ITKM with increasing number of training signals \label{fig1}}
\end{figure}

\subsection{Stability of Recovery Error under Coherence and Noise}
With the last two experiments we illustrate the stability of the maximisation criterion under coherence and noise. As generating dictionaries we use again the canonical basis plus half Hadamard dictionaries described in the last experiment, which have coherence $\mu=d^{-1/2}$.To test the stability under coherence we use a large enough number of noiseless training signals $N=16 384$, such that the distance between the local maximum of the criterion near the generating dictionary, that is the output of ITKM with oracle initialisation, and the generating dictionary is mainly determined by the ratio between the gap size $\beta$ and the coherence. For each set of parameters $d, S(T)$ we create $N$ training signals with decreasing gap sizes $\beta$ by increasing $b$ from 0 to 0.1 in steps of 0.01 and run ITKM with oracle initialisation, parameter $S$ and 1000 iterations. Figure~\ref{fig2}(a) shows the recovery error $d(\dico,\tilde{\pdico})$ between the original dictionary $\dico$ and the output of ITKM $\tilde{\pdico}$ again averaged over 10 trials. \\
Again the experiments reflect our theoretical results. For $d=8,16$ with $S=1$ or $d=16$ with $S=2$ the gap size is large enough that over the whole range of parameters the recovery error stays constantly low at the level defined by the number of samples. Note that this is quite good, since for $b=0.1$ we are already far beyond the gap size coherence ratios where the stable theoretical results hold. On the other hand for $d=8$ with $S=2$ or $d=16$ with $S=3$ early on the gap decreases enough to become the error determining factor and so we see an increase in recovery error as $b$ grows. \\
Conversely to test the stability under noise we use a large enough number of exactly sparse training signals, such that the recovery error will be mainly determined by the noise level. For each set of parameters $d, S(S), b$ we create $N=16384$ training signals with Gaussian noise of variance (noise level) $\nsigma^2$ going from $0$ to $0.1$ in steps of $0.01$ and run ITKM with oracle initialisation, parameter $S$ and 1000 iterations. Figure~\ref{fig2}(b) shows the recovery error $d(\dico,\tilde{\pdico})$ between the original dictionary $\dico$ and the output of ITKM $\tilde{\pdico}$, this time averaged over 20 trials. \\
\begin{figure}[thb]
\hspace{-1cm}
\begin{tabular}{cc}
   \includegraphics[width=8cm]{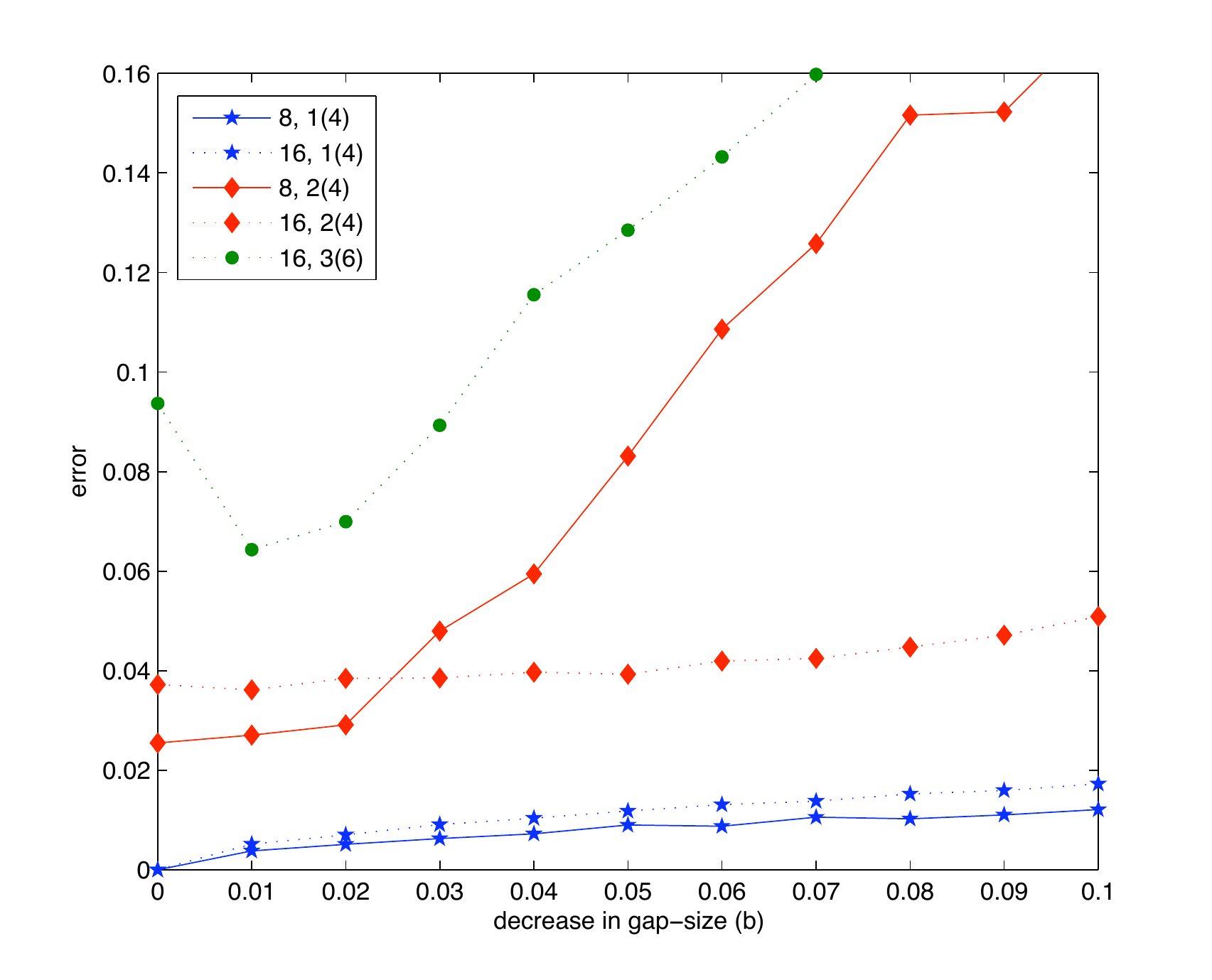} & \includegraphics[width=8cm]{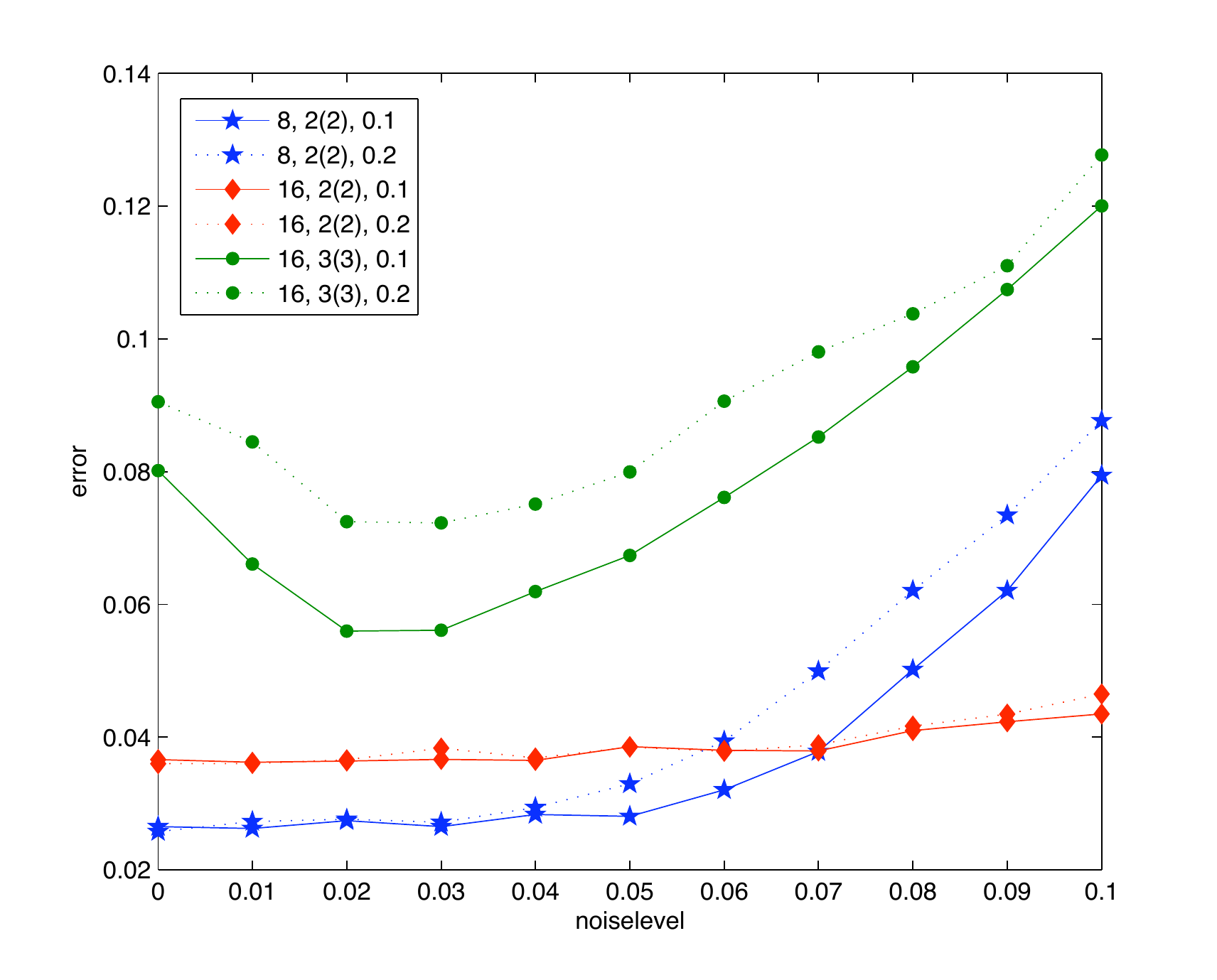}\\
  (a) & (b)
  \end{tabular}
    \caption{Increase of recovery error with (a) decreasing ratio between coefficient gap and coherence and (b) increasing noise level \label{fig2}}
\end{figure}
The curves again correspond to the prediction of the theoretical results, that is the recovery error stays at roughly the same level defined by the number of samples until the noise becomes large enough and then increases.
What is maybe interesting to observe in both experiments is the dithering effect for $d=16$ with $S=3$, which is due to the special structure of the dictionary. Indeed using almost equally sized, almost exactly sparse coefficients, it is possible to build signals using only the canonical basis that have almost the same response in only half the Hadamard basis and the other way round. This indicates that slight perturbations of one with the other lead to even better responses and therefore a larger recovery error.
After showing that the theoretical results translate into algorithmic practice, we finally turn to a discussion of our results in the context of existing work and point out directions of future research.

\section{Discussion \label{sec:discussion}}
We have introduced a new response maximisation principle for dictionary learning and shown that this is locally suitable to identify a generating $\mu$-coherent dictionary from approximately $S$-sparse training signals to arbitrary precision as long as the sparsity level is of the order $O(\mu^{-1})$. We have also presented - to the best of our knowledge - the first results showing that stable dictionary identification is locally possible not only for signal to noise ratios of the order $O(\sqrt{d})$ but also for sparsity levels of the order $O(\mu^{-2})$. \\
The derived sample complexity (omitting log factors) of $O(K^3d \tilde \eps^{-2})$, for signals with sparsity levels $S=O(1)$ is roughly the same as for the K-SVD criterion, \cite{sc14}, or the $\ell_1$-minimisation criterion, \cite{bagrje14}, but somewhat large compared to recently developed dictionary algorithms that have a sample complexity of $O(K^2)$, \cite{argemo13, aganjaneta13}, or $O(K\eps^{-2})$, \cite{aganne13}. However, as the sparsity approaches and goes beyond $\mu^{-1}\sim \sqrt{d}$ the derived sample complexity of $O(K^3d^2 \tilde \eps^{-2})$ compares quite favourably to the sample complexity of $O(K^{1/(4\eta)})$ for a sparsity level $d^{1/2-\eta}$ as projected in \cite{argemo13}. Given that also our experimental results suggest that $O(K^3d \tilde \eps^{-2})$ is quite pessimistic, one future direction of research aims to lower the sample complexity. In particular ongoing work suggests that for the ITKM \emph{algorithm} a sample size of order $K^2$ is enough to guarantee local recovery with high probability.\\
Another strong point of the results is that the corresponding maximisation algorithm ITKM (Iterative Thresholding and $K$ signed Means) is locally successful, as demonstrated in several experiments, and computationally very efficient. The most complex step in each iteration is the matrix multiplication $\dico^\star Y$ of order $O(dKN)$, which is even lighter than the iterative averaging algorithm described in \cite{argemo13}.\\
However, the serious drawback is that ITKM is only a local algorithm and that all our results are only local. Also while for the K-SVD criterion and the $\ell_1$-minimisation criterion there is reason to believe that all local minima might be equivalent, the response maximisation principle has a lot of smaller local maxima, which is confirmed by preliminary experiments with random initialisations. There ITKM fails but with grace, that is, it outputs local maximisers that have not all, but only most atoms in common with what seems to be the global maximiser near the generating dictionary.
This behaviour is in strong contrast to the algorithms presented in \cite{argemo13, aganne13}, that have global success guarantees at a computational cost of the order $O(dN^2)$, and leads to several very important research directions.\\
First we want to confirm that ITKM has a convergence radius of the order $O(1/\sqrt{S})$. This is suggested by the derived radius of the area on which the generating dictionary is the optimal maximiser as well as preliminary experiments. Alternatively, we could investigate how the results for the local iterative algorithms in \cite{argemo13, aganjaneta13} could be extended to larger sparsity levels and convergence radii using our techniques. The associated important question is how to extend the results for the algorithms presented in \cite{argemo13, aganne13} to sparsity levels $O(\mu^{-2})$, if possible at lower cost than $O(dN^2)$. Given the conjectured size of the convergence radius for ITKM it would even be sufficient for the output of the algorithm to arrive at a dictionary within distance $O(1/\sqrt{S})$ to the generating dictionary, since the output could then be used as initialisation for ITKM. \\
A parallel approach for getting global identification results for sparsity levels $O(\mu^{-2})$, that we are currently pursuing, is to analyse a version of ITKM using residual instead of pure signal means, which in preliminary experiments exhibits global convergence properties.\\
The last research directions we want to point out are concerned with the realism of the signal model. The fact that for an input sparsity $S$ a gap of order $O(\mu^{-2})$ between the $S$ and $S+1$ largest coefficient is sufficient can be interpreted as a relaxed dependence of the algorithm on the sparsity parameter, since a gap of order $\mu^{-2}$ can occur quite frequently. To further decrease this sensitivity to the sparsity parameter in the criterion and the algorithm we would therefore like to extend our results to the case where we can only guarantee a gap of order $O(\mu^{-2})$ between the $S$ largest and the $S+T$ largest coefficient for some $T>1$. Last but not least we would like to exactly reflect the practical situation, where we would normalise our training signals to equally weight their importance and analyse the unit norm signal model where $y=\dico x + r/\|\dico x +r\|_2$.

\acks{This work was supported by the Austrian Science Fund (FWF) under Grant no. J3335
and improved thanks to the reviewers' comments and suggestions. 
Thanks also go to the Computer Vision Laboratory of the University of Sassari, Italy, which provided the surroundings, where almost all of the presented work was done, and to Michael Dymond, who looked over the final manuscript with a native speaker's eye.}

\appendix

\section{Proofs}

\subsection{Proof of Theorem~\ref{th:exactasympt}}\label{app:exactsympt}
We first reformulate and prove the theorem for the simple case of a symmetric coefficient distribution based on one sequence and then use an integration argument to extend it to the general case.
\begin{proposition}\label{th:exactsimple}
Let $\dico$ be a unit norm frame with frame constants $A\leq B$ and coherence $\mu$.
Let $x \in \R^K$ be a random permutation of a sequence $\amp$, where $\amp_1 \geq \amp_2 \geq \amp_3 \ldots \geq \amp_\natoms \geq 0$ and $\|\amp\|_2=1$, provided with random $\pm$ signs, that is $x=\amp_{p, \sigma}$ with probability $\P(p,\sigma)=(2^\natoms \natoms!)^{-1}$. Assume that the signals are generated as $y=\dico x$.
If $\amp$ satisfies $\amp_S > \amp_{S+1} + 2 \mu \|\amp\|_1$ then there is a local maximum of \eqref{expcrit} at $\dico$. \\
Moreover for $\pdico \neq \dico$ we have $\E_y \left( \max_{|I| = S} \| \pdico_I^\star y\|_1 \right) < \E_y \left( \max_{|I| = S} \| \dico_I^\star y\|_1 \right) $ as soon as 
\begin{align}\label{epscondexact}
d(\dico,\pdico) \leq \frac{c_S-c_{S+1}- 2\mu\|c\|_1}{1 + 3\sqrt{ \log\left( \frac{25 K^2 S \sqrt{B}}{(c_S-c_{S+1}- 2\mu\|c\|_1)(c_1+ \ldots +c_S)}\right) } }.
 \end{align}
\end{proposition}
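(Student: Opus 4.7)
The plan is to show directly that for every $\pdico \neq \dico$ with $\eps := d(\dico,\pdico)$ obeying the bound in \eqref{epscondexact}, we have $\E_y\bigl[M_\dico(y) - M_\pdico(y)\bigr] > 0$, where $M_\Phi(y) := \max_{|I|=S}\|\Phi_I^\star y\|_1$. This immediately delivers both conclusions. First I would identify the argmax for the generating dictionary. Fix a permutation $p$, set $I_p := p^{-1}(\{1,\ldots,S\})$, and for $y=\dico c_{p,\sigma}$ expand $\ip{\atom_i}{y} = \sigma_i c_{p(i)} + \sum_{j\neq i}\sigma_j c_{p(j)}\ip{\atom_i}{\atom_j}$, so that $\bigl||\ip{\atom_i}{y}| - c_{p(i)}\bigr|\leq \mu\|c\|_1$. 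The hypothesis $c_S - c_{S+1} - 2\mu\|c\|_1 \geq \beta >0$ then forces, for every $\sigma$, the argmax of $\|\dico_I^\star y\|_1$ to be exactly $I_p$, with the smallest response inside $I_p$ exceeding the largest response outside by at least $\beta$; in particular $M_\dico(y) = \sum_{i \in I_p}|\ip{\atom_i}{y}|$.

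Next, decompose $\patom_i = (1 - \eps_i^2/2)\atom_i + z_i$ with $z_i \perp \atom_i$, $\|z_i\|_2^2 = \eps_i^2(1-\eps_i^2/4)$, and $\eps_i := \|\patom_i - \atom_i\|_2 \leq \eps$. For each $p$ call a sign sequence \emph{typical}, $\sigma \in G_p$, when $|\ip{z_i}{y}| \leq \beta/4$ for all $i$. Since $\ip{z_i}{y} = \sum_{j\neq i}\sigma_j c_{p(j)}\ip{z_i}{\atom_j}$ is a Rademacher sum with $\sum_j c_{p(j)}^2 \ip{z_i}{\atom_j}^2 \leq \|z_i\|_2^2 \leq \eps_i^2$, Hoeffding plus a union bound gives $\P_\sigma(G_p^c) \leq 2K\exp\bigl(-\beta^2/(32\eps^2)\bigr)$. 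On $G_p$ the argmax set for $\pdico$ is still $I_p$ and $\signop(\ip{\patom_i}{y}) = \sigma_i$ for every $i \in I_p$; using the unit-norm constraint $\ip{\atom_i}{\Delta_i} = -\eps_i^2/2$ this yields the clean identity
\begin{equation*}
M_\pdico(y) - M_\dico(y) = -\tfrac{1}{2}\sum_{i \in I_p}\eps_i^2 |\ip{\atom_i}{y}| + \sum_{i \in I_p}\sigma_i\ip{z_i}{y}.
\end{equation*}

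Now the crucial cancellation: averaged over all $\sigma \in \{\pm 1\}^K$, the linear-in-$z$ term vanishes because $\E_\sigma[\sigma_i\sigma_j] = \delta_{ij}$ and $\ip{z_i}{\atom_i}=0$. Restricting back to $G_p$ costs at most $\sqrt{B}\eps\cdot\P_\sigma(G_p^c)$ per atom. The quadratic loss, averaged over permutations using $\E_p[\mathbf{1}_{i\in I_p} c_{p(i)}] = (c_1+\cdots+c_S)/K$, contributes at least $\tfrac{c_1+\cdots+c_S - S\mu\|c\|_1}{K}\sum_i \eps_i^2$, essentially $\tfrac{c_1+\cdots+c_S}{K}\sum_i\eps_i^2$ under the coherence hypothesis. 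On the atypical set $G_p^c$ bound crudely $|M_\pdico(y) - M_\dico(y)| \leq 2S\sqrt{B}$ from $\|y\|_2 \leq \sqrt{B}$ and unit-norm atoms. Combining,
\begin{equation*}
\E_y\bigl[M_\dico(y) - M_\pdico(y)\bigr] \;\geq\; \frac{c_1+\cdots+c_S}{2K}\sum_i \eps_i^2 \;-\; c\cdot K S\sqrt{B}\exp\!\Bigl(-\frac{\beta^2}{32\eps^2}\Bigr),
\end{equation*}
and positivity reduces to a transcendental inequality of the form $\eps^2\exp(\beta^2/(32\eps^2)) > c'K^2 S\sqrt{B}/(c_1+\cdots+c_S)$, which inverts to the $\eps$-bound in \eqref{epscondexact}. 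The final step is to pass from this simple symmetric measure on a single sequence $c$ to the general case via an integration argument over the symmetric probability measure $\nu$.

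The main obstacle is bookkeeping: to hit the precise constants in \eqref{epscondexact} one must (i) tune the threshold $\beta/4$ in the definition of $G_p$ so that the Hoeffding tail simultaneously absorbs the $\sqrt{B}$ blow-up on $G_p^c$ and the restriction-to-$G_p$ error of the linear term, and (ii) invert the transcendental inequality carefully to extract the $1 + 3\sqrt{\log(\cdot)}$ form. Beyond this, the heart of the argument is the first-order cancellation in step three: the $\pm$-symmetry of the coefficient distribution together with $\ip{z_i}{\atom_i}=0$ kills the $O(\eps)$ variation entirely, leaving an $O(\eps^2)$ strictly negative term which rules out any ascent direction at $\dico$. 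This is what makes the generating dictionary a genuine local maximum despite the absence of a tightness assumption that was essential for the K-SVD analysis in \cite{sc14}.
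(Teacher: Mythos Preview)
Your proposal is correct and follows essentially the same route as the paper: identify $I_p$ via the deterministic gap, parametrise the perturbation orthogonally, use Hoeffding to isolate a good set of sign sequences on which the argmax and the signs are preserved, exploit the $\pm$-symmetry to kill the linear-in-$z$ term, and bound the exponentially small bad set. Two small points of execution differ from the paper and are worth tightening. First, on $G_p^c$ you bound $|M_\pdico(y)-M_\dico(y)|\leq 2S\sqrt{B}$; the paper instead uses the Lipschitz-type estimate $|\ip{\patom_i}{y}|\in[(1-\tfrac{\eps^2}{2})|\ip{\atom_i}{y}|\pm\eps\sqrt{B}]$ to get $|M_\pdico(y)-M_\dico(y)|\leq 2S\eps\sqrt{B}$, which gives the transcendental inequality $\eps>C\exp(-(\beta-\eps^2/2)^2/(8\eps^2))$ rather than your $\eps^2>C\exp(-\beta^2/(32\eps^2))$ and inverts directly to the constants in \eqref{epscondexact} via Lemma~A.3 of \cite{sc14}. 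Second, rather than restricting to $G_p$ and then paying a correction for the linear term, the paper simply computes $\E_\sigma\|\pdico_{I_p}^\star \dico c_{p,\sigma}\|_1$ over \emph{all} $\sigma$: under the mild deterministic condition $\eps\leq((1-\eps^2/2)c_S-\mu\|c\|_1)/\sqrt{B}$ the argument of each absolute value is always positive, so the absolute values drop and the sign symmetry yields $\sum_{i\in I_p}\alpha_i c_{p(i)}$ exactly, with no restriction error. Finally, the integration argument you mention at the end is not part of this proposition; it is how the paper passes from Proposition~\ref{th:exactsimple} to Theorem~\ref{th:exactasympt}.
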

\begin{proof}
We start by evaluating the objective function at the original dictionary~$\dico$. 
\begin{align}
 \E_y \left( \max_{|I| = S} \| \dico_I^\star y\|_1 \right) &= \E_p \E_\sigma \left( \max_{|I| = S} \| \dico_I^\star \dico \amp_{p,\sigma}\|_1 \right)  =\E_p \E_\sigma \left( \max_{|I| = S} \sum_{i\in I } |\ip{ \atom_i}{ \dico \amp_{p,\sigma}}| \right).\notag
\end{align} 
To estimate the sum of the (in absolute value) largest $S$ inner products, we first assume that $p$ is fixed. Setting $I_p = p^{-1}\left(\{1,\ldots S\}\right)$ we have, 
\begin{align}  
 |\ip{ \atom_i}{\dico \amp_{p,\sigma}}|&= \Big|  \sigma_{i}  \amp_{p(i)} + \sum_{j\neq i} \sigma_j \amp_{p(j)}\ip{ \atom_{i}}{\atom_j  } \Big| 
\begin{array}{lr}
 \geq c_S - \mu \|\amp\|_1 & \forall i \in I_p\\
  \leq c_{S+1} + \mu \|\amp\|_1 & \forall i \notin I_p
  \end{array}.\notag
\end{align} 
Together with the condition that $\amp_S > \amp_{S+1} + 2 \mu \|\amp\|_1$ these estimates ensure that the $S$ maximal inner products in absolute value are attained at $I_p$
and so we get for the expectation,
\begin{align}
\E_p \E_\sigma \left( \max_{|I| = S} \| \dico_I^\star \dico \amp_{p,\sigma}\|_1 \right) &= 
\E_p \E_\sigma \left(\| \dico_{I_p}^\star \dico \amp_{p,\sigma}\|_1 \right) \notag \\
&=\E_p \E_\sigma \left(\sum_{i\in I_p} \Big|   \amp_{p(i)} +  \sigma_{i} \sum_{j\neq i} \sigma_j \amp_{p(j)}\langle \atom_{i}, \atom_j  \rangle \Big| \right) =c_1 +\ldots +c_S.\notag
\end{align} 
%
To compute the expectation for a perturbation of the original dictionary we use the following parametrisation of all $\eps$-perturbations $\pdico$ of the original dictionary $\dico$. If $d(\pdico,\dico)=\eps$ then $\|\patom_i - \atom_i\|_2=\eps_i$ with $\max_i \eps_i = \eps$ and we have $z_i$ with $\langle \atom_i,z_i\rangle = 0, \|z_i\|_2=1$ and 
$\alpha_i := 1-\eps^2_i/2$ and $\omega_i := (\eps_i^2 - \eps_i^4/4)^{\frac{1}{2}}$, such that
\begin{align}
\patom_i = \alpha_i \atom_i + \omega_i z_i.\notag
\end{align}
Expanding the expectation as before we get,
 \begin{align}
 \E_y \left( \max_{|I| = S} \| \pdico_I^\star y\|_1 \right) &= \E_p \E_\sigma \left( \max_{|I| = S} \| \pdico_I^\star \dico \amp_{p,\sigma}\|_1 \right) = \E_p \E_\sigma \left( \max_{|I| = S} \sum_{i\in I } |\ip{ \patom_i}{ \dico \amp_{p,\sigma}}| \right). \label{exp:pert}
\end{align} 
The tried and tested strategy applied now is showing that for small perturbations and most sign patterns $\sigma$ the maximal inner products are still attained by $i \in I_p$. 
We have 
\begin{align}  
 \forall i \in I_p: \quad |\langle \patom_{i}, \dico \amp_{p,\sigma} \rangle|
  &\geq \alpha_{i} (c_S -  \mu \|\amp\|_1) - \omega_{i} | \ip{ z_{i}}{ \dico \amp_{p,\sigma} }| \notag\\
   \forall i \notin I_p: \quad |\langle \patom_{i}, \dico \amp_{p,\sigma} \rangle|
  &\leq \alpha_{i} (c_{S+1} + \mu \|\amp\|_1) + \omega_i  | \ip{ z_{i}}{ \dico \amp_{p,\sigma} }| \notag.
 \end{align} 
 %
Using Hoeffding's inequality we can estimate the typical sizes of the terms  $ | \ip{ z_{i}}{ \dico \amp_{p,\sigma} }|$,
\begin{align}
\P( | \langle z_{i}, \dico \amp_{p,\sigma} \rangle| \geq t) &= \P( | \sum_{j\neq i} \sigma_j \amp_{p(j)}\langle z_{i}, \atom_j  \rangle| >t)\notag \\
&\leq 2 \exp \left( -\frac{t^2}{2 \sum_{j\neq i}\amp_{p(j)}^2\langle z_{i}, \atom_j  \rangle^2}\right) \leq 2 \exp \left( -\frac{t^2}{2 }\right).\notag
\end{align}
In case $\omega_{i} \neq 0$ or equivalently $\eps_{i}\neq 0$, we set $t=s/\omega_{i}$ to arrive at 
\begin{align}
\P(\omega_{i} | \langle z_{i}, \dico \amp_{p,\sigma} \rangle|  \geq s ) \leq 2 \exp \left(-\frac{s^2}{2\omega_{i}^2 }\right) \leq 2 \exp \left(-\frac{s^2}{2 \eps_{i}^2}\right),\notag
\end{align}
where we have used that $\omega_{i}^2= \eps_{i}^2 - \eps_{i}^4/4 \leq \eps_{i}^2$, while in case $\eps_i = 0$ we trivially have that $\P( \omega_i |\ip{z_i}{\dico \amp_{p,\sigma}}| \geq s) = 0.$
Summarising these findings we see that except with probability 
\begin{align}
\eta :=2\sum_{i:\eps_i\neq 0} \exp \left(-\frac{s^2}{2 \eps_{i}^2}\right),\notag
\end{align}
we have
\begin{align}  
 \forall i \in I_p: \quad |\langle \patom_{i}, \dico \amp_{p,\sigma} \rangle|
  &\geq \alpha_{i} (c_S -  \mu \|\amp\|_1) - s \notag\\
   \forall i \notin I_p: \quad |\langle \patom_{i}, \dico \amp_{p,\sigma} \rangle|
  &\leq \alpha_{i} (c_{S+1} + \mu \|\amp\|_1) + s.\notag
 \end{align}
This means that as long as $\min_{ i \in I_p}  \alpha_{i} (c_S -  \mu \|\amp\|_1) - s \geq \max_{ i \notin I_p}   \alpha_{i} (c_{S+1} + \mu \|\amp\|_1) + s$, which is for instance implied by setting
$
s:= \frac{1}{2}(c_S-c_{S+1}- 2\mu\|c\|_1 - \frac{\eps^2}{2})
$, we have
\begin{align}
  \max_{|I| = S} \| \pdico_I^\star \dico \amp_{p,\sigma}\|_1
  =\| \pdico_{I_p}^\star \dico \amp_{p,\sigma}\|_1.\notag
\end{align}
We now use this result for the calculation of the expectation over $\sigma$ in \eqref{exp:pert}. For any permutation $p$ we define the set
\begin{align}
\Sigma_p:=\bigcup_{i} \{\sigma \mbox{ s.t. }  \omega_i|\ip{z_i}{\dico \amp_{p,\sigma}}| \geq  s\}.\notag
\end{align}
We then have
\begin{align}
\E_\sigma \left(  \max_{|I| = S} \| \pdico_I^\star \dico \amp_{p,\sigma}\|_1 \right)&=
 \sum_{\sigma\in \Sigma_p} \P(\sigma) \cdot  \max_{|I| = S} \| \pdico_I^\star \dico \amp_{p,\sigma}\|_1
 + \sum_{\sigma\notin \Sigma_p}\P(\sigma) \cdot \| \pdico_{I_p}^\star \dico \amp_{p,\sigma}\|_1 \notag \\
 &\hspace{-2cm}=
 \sum_{\sigma\in \Sigma_p} \P(\sigma) \cdot \left( \max_{|I| = S} \| \pdico_I^\star \dico \amp_{p,\sigma}\|_1 -\| \pdico_{I_p}^\star \dico \amp_{p,\sigma}\|_1\right) +\E_\sigma\left( \| \pdico_{I_p}^\star \dico \amp_{p,\sigma}\|_1\right). \label{pertexp1}
\end{align}
To estimate the sum over $\Sigma_p$, note that we have the following bounds:
\begin{align}
|\ip{\patom_i}{\dico \amp_{p,\sigma}}| &= |\alpha_i\ip{ \atom_i}{\dico \amp_{p,\sigma}} + \omega_i \ip{z_i}{\dico \amp_{p,\sigma}}|
\left\{ \begin{array}{ll}
\leq ( 1- \frac{\eps^2}{2}) |\ip{ \atom_i}{\dico \amp_{p,\sigma}}| + \eps \sqrt{B}\\
\geq ( 1- \frac{\eps^2}{2}) |\ip{ \atom_i}{\dico \amp_{p,\sigma}}| - \eps \sqrt{B}\\
\end{array}\right. ,\notag
\end{align}
leading to
\begin{align}
\max_{|I| = S} \| \pdico_I^\star \dico \amp_{p,\sigma}\|_1 &\leq ( 1- \frac{\eps^2}{2})\max_{|I| = S} \| \dico_I^\star \dico \amp_{p,\sigma}\|_1 + S \cdot \eps \sqrt{B}=( 1- \frac{\eps^2}{2})\| \dico_{I_p}^\star \dico \amp_{p,\sigma}\|_1 + S \cdot \eps \sqrt{B}\notag\\
\| \pdico_{I_p}^\star \dico \amp_{p,\sigma}\|_1 &\geq ( 1- \frac{\eps^2}{2}) \| \dico_{I_p}^\star \dico \amp_{p,\sigma}\|_1 - S \cdot \eps \sqrt{B}.\notag
\end{align}
Substituting these estimates into \eqref{pertexp1} we get
\begin{align}
\E_\sigma \left(  \max_{|I| = S} \| \pdico_I^\star \dico \amp_{p,\sigma}\|_1 \right) &\leq  \sum_{\sigma\in \Sigma_p} \P(\sigma) \cdot 2\eps S \sqrt{B} +\E_\sigma\left( \| \pdico_{I_p}^\star \dico \amp_{p,\sigma}\|_1\right)\notag\\
& \leq \eta \cdot 2\eps S \sqrt{B} +\E_\sigma\left( \| \pdico_{I_p}^\star \dico \amp_{p,\sigma}\|_1\right).\notag
 \end{align}
Next we calculate $\E_\sigma\left( \| \pdico_{I_p}^\star \dico \amp_{p,\sigma}\|_1\right)$:
\begin{align}
\E_\sigma\left( \| \pdico_{I_p}^\star \dico \amp_{p,\sigma}\|_1\right)&=
\E_\sigma\left( \sum_{i \in I_p} | \ip{\patom_i}{ \dico \amp_{p,\sigma}}|\right)\notag \\
& =\E_\sigma\left( \sum_{i \in I_p} \Big | \alpha_i c_{p(i)} + \sigma_i \ip{ \alpha_i \atom_i+\omega_i z_i}{ \sum_{j\neq i}  \sigma_j c_{p(j)}\atom_j} \Big | \right) =\sum_{i \in I_p} \alpha_i c_{p(i)},\label{absperturbation}
 \end{align}
 where have used that $\eps \leq ( (1-\frac{\eps^2}{2})c_S - \mu\|c\|_1)/\sqrt{B}$ guarantees the expression within absolute values in \eqref{absperturbation} to always be positive. 
Collecting all these results we arrive at the following estimate for the value of the objective function at $\pdico$:
 \begin{align*}
 \E_y \left( \max_{|I| = S} \| \pdico_I^\star y\|_1 \right) &= \E_p \E_\sigma \left( \max_{|I| = S} \| \pdico_I^\star \dico \amp_{p,\sigma}\|_1 \right) \\
 &\hspace{-1cm}\leq  \E_p \left(  4\eps S \sqrt{B}\sum_{i:\eps_i\neq 0} \exp \left(-\frac{(c_S-c_{S+1}- 2\mu\|c\|_1 - \frac{\eps^2}{2})^2}{8 \eps_{i}^2 }\right) + \sum_{i \in I_p} \alpha_i c_{p(i)} \right) \\
  &\hspace{-1cm}\leq4\eps S \sqrt{B}\sum_{i:\eps_i\neq 0} \exp \left(-\frac{(c_S-c_{S+1}- 2\mu\|c\|_1 - \frac{\eps^2}{2})^2}{8 \eps_{i}^2}\right) +\frac{c_1+ \ldots +c_S}{K} \sum_{i} \alpha_i .
\end{align*} 
Finally we are able to compare the expectation at the original dictionary to that at an $\eps$-perturbation. Remembering that $\alpha_i = 1- \frac{\eps_i^2}{2}$, we get
\begin{align}
 \E_y \left( \max_{|I| = S} \| \dico_I^\star y\|_1 \right)& -  \E_y \left( \max_{|I| = S} \| \pdico_I^\star y\|_1 \right)\notag\\
 &\hspace{-1cm}\geq \frac{c_1+ \ldots +c_S}{K} \sum_{i}\frac{\eps_i^2}{2} -4\eps S \sqrt{B}\sum_{i:\eps_i\neq 0} \exp \left(-\frac{(c_S-c_{S+1}- 2\mu\|c\|_1 - \frac{\eps^2}{2})^2}{8 \eps_{i}^2 }\right)\notag \\
   &\hspace{-1cm} \geq  \eps^2 \frac{c_1+ \ldots +c_S}{2K}-  4\eps S K \sqrt{B}  \exp \left(-\frac{(c_S-c_{S+1}- 2\mu\|c\|_1 - \frac{\eps^2}{2})^2}{8 \eps^2 }\right) .\notag
\end{align}
Thus to have a local maximum at the original dictionary we need that 
\begin{align}
 \eps > \frac{8  S K^2 \sqrt{B}}{c_1+ \ldots +c_S} \exp \left(-\frac{(c_S-c_{S+1}- 2\mu\|c\|_1 - \frac{\eps^2}{2})^2}{8 \eps^2 }\right),\notag
\end{align}
and all that remains to be shown is that this is implied by \eqref{epscondexact}.
Since $K\geq 2$, \eqref{epscondexact} implies that $\frac{\eps^2}{2} < \frac{c_S-c_{S+1}- 2\mu\|c\|_1}{2(1 + 3\sqrt{ \log96})^2} \leq \frac{c_S-c_{S+1}- 2\mu\|c\|_1}{100}$ and it suffices to show that \eqref{epscondexact} further implies 
\begin{align}\label{toshow}
 \eps> \frac{8  S K^2 \sqrt{B}}{c_1+ \ldots +c_S} \exp \left(-\frac{(c_S-c_{S+1}- 2\mu\|c\|_1)^2 \cdot 99^2}{8  \eps^2 \cdot 100^2 }\right).
\end{align}
Applying Lemma A.3 from \cite{sc14}, which says that for $a,b,\eps > 0$,
\begin{align}
\eps \leq \frac{4b}{1+ \sqrt{1+16\log(\frac{a}{b})}} \quad \mbox{implies that} \quad a \exp \left(\frac{-b^2}{\eps^2} \right)< \eps,\notag
\end{align} 
to the situation at hand, where $a= \frac{8  S K^2 \sqrt{B}}{c_1+ \ldots +c_S} $ and $b= \frac{(c_S-c_{S+1}- 2\mu\|c\|_1) \cdot 99}{\sqrt{8} \cdot 100}$, we get that \eqref{toshow} is ensured by
\begin{align}
\eps <\frac{c_S-c_{S+1}- 2\mu\|c\|_1}{ \sqrt{8} \cdot \frac{25}{99} \left(1 +  \sqrt{16\log\left( \frac{8\sqrt{8}\cdot \frac{100}{99}  e^{1/16} S K^2 \sqrt{B}}{(c_S-c_{S+1}- 2\mu\|c\|_1)(c_1+ \ldots +c_S)}\right)}\right)},\notag
\end{align}
which simplifies to \eqref{epscondexact}.
\end{proof}
\begin{proof}[of Theorem~\ref{th:exactasympt}]\\
Using the symmetry of $\nu$, our strategy is to reduce the general to the simple coefficient model.
Let $\amp$ denote the mapping that assigns to each $x\in S^{K-1}$ the non increasing rearrangement of the absolute values of its components, that is $\amp_i(x)=|x_{p(i)}|$ for a permutation $p$ such that $\amp_1(x)\geq \amp_2(x) \geq \ldots \geq \amp_K(x)\geq 0$. Then the mapping $\amp$ together with the probability measure $\nu$ on $S^{\natoms-1}$ induces a pull-back probability measure $\nu_\amp$ on $\amp(S^{\natoms-1})$, by $\nu_\amp (\Omega):=\nu(\amp^{-1}(\Omega))$ for any measurable set $\Omega\subseteq \amp(S^{\natoms-1})$. With the help of this new measure we can rewrite the expectations we need to calculate as
\begin{align}
\E_y \left( \max_{|I| = S} \| \dico_I^\star y\|_1 \right)& 
= \E_x \left( \max_{|I| = S} \| \dico_I^\star \dico x\|_1 \right)\notag\\
& = \int_{x} \max_{|I| = S} \| \dico_I^\star \dico x\|_1 d\nu
= \int_{\amp(x)} \E_p \E_\sigma \left( \max_{|I| = S} \|\dico^\star \dico \amp_{p,\sigma}(x)\|_1 \right)  d\nu_\amp.\notag
\end{align}
The expectation inside the integral should seem familiar. Indeed we have calculated it already in the proof of Proposition~\ref{th:exactsimple} for $\amp(x)$ a fixed decaying sequence satisfying $\amp_S(x) > \amp_{S+1}(x) + 2 \mu \|x\|_1$. Since this property is satisfied almost surely we have
\begin{align}
\E_y \left( \max_{|I| = S} \| \dico_I^\star y\|_1 \right) & 
= \int_{\amp(x)} \E_p \E_\sigma \left( \max_{|I| = S} \|\dico^\star \dico \amp_{p,\sigma}(x)\|_1 \right)  d\nu_\amp \notag\\
&=\int_{\amp(x)} c_1(x) +\ldots+ c_S(x)  d\nu_\amp:= \bar c_1 +\ldots +\bar c_S.\notag
\end{align}
For the expectation of a perturbed dictionary $\pdico$ we get in analogy
\begin{align}
\E_y \left( \max_{|I| = S} \| \pdico_I^\star y\|_1 \right) & 
= \int_{\amp(x)} \E_p \E_\sigma \left( \max_{|I| = S} \|\pdico^\star \dico \amp_{p,\sigma}(x)\|_1 \right)  d\nu_\amp\notag\\
&\leq \int_{\amp(x)} \eta(x) +\left( c_1(x)+ \ldots +c_S(x) \right)  \frac{1}{K} \sum_{i} \alpha_i \, d\nu_\amp,\notag
\end{align}
where
\begin{align}
\eta(x) := 4\eps S \sqrt{B}\sum_{i:\eps_i\neq 0} \exp \left(-\frac{(c_S(x)-c_{S+1}(x)- 2\mu\|x\|_1 - \frac{\eps^2}{2})^2}{8 \eps_{i}^2}\right).\notag
\end{align}
Since $ \amp_S(x)-\amp_{S+1}(x) - 2\mu \|x\|_1 \geq \beta $ almost surely we have
\begin{align}
\eta(x) \leq 4\eps S \sqrt{B}\sum_{i:\eps_i\neq 0} \exp \left(-\frac{(\beta- \frac{\eps^2}{2})^2}{8 \eps_{i}^2}\right):=\eta_\beta,\notag
\end{align}
almost surely and therefore
\begin{align}
\E_y \left( \max_{|I| = S} \| \pdico_I^\star y\|_1 \right) & \leq  \eta_\beta +\left( \bar c_1+ \ldots +\bar c_S \right)  \frac{1}{K} \sum_{i} \alpha_i.\notag
\end{align}
Following the same argument as in the proof of Proposition~\ref{th:exactsimple} we see that $\E_y \left( \max_{|I| = S} \| \dico_I^\star y\|_1 \right)>\E_y \left( \max_{|I| = S} \| \pdico_I^\star y\|_1 \right)$ as soon as 
\begin{align}
 \eps < \frac{\beta}{1 + 3\sqrt{ \log\left( \frac{25 K^2 S \sqrt{B}}{\beta (\bar c_1+ \ldots +\bar c_S)}\right) } }.\notag
\end{align}
\end{proof}

\subsection{Proof of Theorem~\ref{th:stableasympt}} \label{app:stableasympt}
Again we first reformulate and prove the theorem for the case of a symmetric coefficient distribution based on one sequence and then extend it with an integration argument.
\begin{proposition}\label{th:stablesimple}
Let $\dico$ be a unit norm frame with frame constants $A\leq B$ and coherence $\mu$.
Let $x \in \R^K$ be a random permutation of a sequence $\amp$, where $\amp_1 \geq \amp_2 \geq \amp_3 \ldots \geq \amp_\natoms \geq 0$ and $\|\amp\|_2=1$, provided with random $\pm$ signs, that is $x=\amp_{p, \sigma}$ with probability $\P(p,\sigma)=(2^\natoms \natoms!)^{-1}$. Further let $\noise=(\noise(1) \ldots \noise(d))$ be a centred random subgaussian noise-vector with parameter $\nsigma$ and assume that the signals are generated according to the noisy signal model in~\eqref{noisymodel}. If we have
\begin{align}\label{munoisecond}
\max\{\mu,\nsigma\} \leq \frac{\amp_S - \amp_{S+1}}{ \sqrt{ 72(\log a + \log\log a)} } \quad \mbox{for} \quad a= \frac{112 K^2 S (\sqrt{B}+1)}{C_\noise(c_S-c_{S+1})(c_1+ \ldots +c_S)},
\end{align}
where $C_\noise= \E_\noise \left((1+\|\noise\|_2^2)^{-1/2}\right)$, then there is a local maximum of \eqref{expcrit} at $\tilde{\pdico}$ satisfying,
\begin{align}
d(\tilde{\pdico},\dico) \leq \frac{12 SK^2 \sqrt{B}}{C_\noise(c_1+ \ldots +c_S)}  \exp\left( \frac{-(c_S - c_{S+1})^2}{72\max\{\mu^2,\nsigma^2\}} \right).\notag
\end{align} 
\end{proposition}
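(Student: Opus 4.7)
The plan is to follow the architecture of Proposition~\ref{th:exactsimple} but to replace the deterministic gap condition $\amp_S - \amp_{S+1} > 2\mu\|\amp\|_1$ by a probabilistic argument that handles the coherence contribution and the sub-Gaussian noise simultaneously. First I would expand a generic response of the original dictionary at signal $y = (\dico \amp_{p,\sigma} + \noise)/\sqrt{1+\|\noise\|_2^2}$ as
\begin{align}
\sqrt{1+\|\noise\|_2^2}\,\ip{\atom_i}{y} = \sigma_i \amp_{p(i)} + \sum_{j\neq i} \sigma_j \amp_{p(j)} \ip{\atom_i}{\atom_j} + \ip{\atom_i}{\noise},\notag
\end{align}
and apply Hoeffding's inequality to the two cross-terms: the first term is bounded by $s$ except with probability $2\exp(-s^2/(2\mu^2))$, and the second by $s$ except with probability $2\exp(-s^2/(2\nsigma^2))$. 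Choosing $s$ a suitable fraction of the gap $\beta=\amp_S-\amp_{S+1}$ then ensures that on a good event the $S$ largest responses of $\dico$ are still attained at $I_p = p^{-1}(\{1,\ldots,\sparsity\})$, exactly as in the deterministic proof.

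Next, for a perturbation $\pdico$ parametrised by $\patom_i = \alpha_i \atom_i + \omega_i z_i$ with $\alpha_i = 1-\eps_i^2/2$, $\omega_i \leq \eps_i$, I would control the extra term $\omega_i \ip{z_i}{\dico \amp_{p,\sigma}+\noise}/\sqrt{1+\|\noise\|_2^2}$ by the same concentration inequalities, obtaining a cumulative exceptional probability
\begin{align}
\eta \;\lesssim\; \sum_{i:\eps_i\neq 0} \exp\!\Bigl(-\frac{s^2}{2\max\{\mu,\nsigma\}^2 \eps_i^2}\Bigr)\notag
\end{align}
outside of which the maximising index set for $\pdico$ is still $I_p$. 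On the good event, the inner expectation collapses to $C_\noise(\amp_1+\ldots+\amp_S)$ at $\dico$ and to $C_\noise K^{-1}(\amp_1+\ldots+\amp_S)\sum_i\alpha_i$ at $\pdico$, exactly as in the noiseless case, thanks to the sign symmetry killing the cross-terms in expectation.

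On the bad event I would use the crude envelope $\bigl|\max_{|I|=S}\|\pdico_I^\star y\|_1 - \|\pdico_{I_p}^\star y\|_1\bigr| \leq 2\eps \sparsity (\sqrt{B}+1)$, which follows from $\|\dico \amp_{p,\sigma}+\noise\|_2 / \sqrt{1+\|\noise\|_2^2} \leq \sqrt{B}+1$ (a short auxiliary estimate using $\|\dico \amp_{p,\sigma}\|_2\leq\sqrt{B}$ and $\|\noise\|_2/\sqrt{1+\|\noise\|_2^2}\leq 1$). Combining yields a gain of order $C_\noise(\amp_1+\ldots+\amp_S)\sum_i\eps_i^2/(2K)$ against a loss of order $2\eps \sparsity(\sqrt{B}+1)\eta$, so that $\dico$ beats $\pdico$ whenever
\begin{align}
\eps \;>\; \frac{8 \sparsity \natoms^2(\sqrt{B}+1)}{C_\noise(\amp_1+\ldots+\amp_S)}\, \exp\!\Bigl(-\frac{(\beta - O(\eps^2))^2}{O(\max\{\mu,\nsigma\}^2)\,\eps^2}\Bigr).\notag
\end{align}

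The main obstacle will be the simultaneous balancing of three competing scales: the threshold $s$ must be comparable to $\beta$ so that the good-event computation identifies $I_p$ as the maximiser, $\eta$ must be small enough that the exceptional loss is dominated by the quadratic gain, and the resulting transcendental inequality for $\eps$ must collapse to the clean form in \eqref{munoisecond}. I would untangle this by setting $s = \beta/3$ (roughly), invoking Lemma A.3 of \cite{sc14} as in Proposition~\ref{th:exactsimple} to solve the transcendental inequality, and reading off the stability radius and the coherence/noise threshold. Finally, lifting from the simple one-sequence model to the general symmetric distribution $\nu$ proceeds verbatim through the pull-back measure $\nu_\amp$ on the non-increasing rearrangements, exactly as in the deduction of Theorem~\ref{th:exactasympt} from Proposition~\ref{th:exactsimple}, with the almost-sure gap hypothesis $\amp_S(x)-\amp_{S+1}(x)\geq\beta$ allowing one to pull the bound outside the integral.
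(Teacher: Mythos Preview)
Your outline follows the paper's architecture closely, but one structural point does not come out right and is essential for the stated conclusion. In the paper the exceptional probability is a sum of terms of two distinct types: the perturbation contribution $\exp\bigl(-(\beta - \tfrac{3\eps^2}{2})^2/(72\eps^2)\bigr)$, which depends on $\eps$ and vanishes as $\eps\to 0$, and the coherence/noise contributions $\exp(-\beta^2/(72\mu^2))$ and $\exp(-\beta^2/(72\nsigma^2))$, which are \emph{constant in $\eps$}. The first gives an upper threshold $\epsmax$; the second gives a strictly positive \emph{lower} threshold $\epsmin$ --- the stability radius in the statement --- below which $\dico$ no longer beats $\pdico$. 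Your merged bound $\exp\bigl(-s^2/(2\max\{\mu,\nsigma\}^2\eps_i^2)\bigr)$ puts $\eps_i$ in the same exponent as $\mu,\nsigma$, and the resulting inequality $\eps \gtrsim \exp\bigl(-C\beta^2/(\max\{\mu,\nsigma\}^2\eps^2)\bigr)$ is then satisfied for \emph{all} small $\eps$, which would (incorrectly) place the maximum exactly at $\dico$ rather than only within a positive radius. The fix is to keep the concentration estimates separate: $\omega_i|\ip{z_i}{\dico c_{p,\sigma}}|$ gives $2\exp(-s^2/(2\eps_i^2))$ with no $\mu,\nsigma$; the coherence term $|\sum_{j\neq i}\sigma_j c_{p(j)}\ip{\atom_i}{\atom_j}|$ gives $2\exp(-s^2/(2\mu^2))$; and the noise terms $|\ip{\atom_i}{\noise}|$, $|\ip{\patom_i}{\noise}|$ give $2\exp(-s^2/(2\nsigma^2))$.

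A secondary organisational difference: the paper does not compute the two expectations separately but bounds the difference $\Delta_{p,\sigma,\noise}$ directly, and on the bad event uses the unconditional one-sided lower bound $\Delta_{p,\sigma,\noise}\geq (1+\|\noise\|_2^2)^{-1/2}\bigl(\tfrac{\eps^2}{2}\|\dico_{I_p}^\star(\dico c_{p,\sigma}+\noise)\|_1 - \eps S(\sqrt{B}+\|\noise\|_2)\bigr)$. Because this still carries a positive $\eps^2/2$ term, the quadratic gain $C_\noise(\amp_1+\ldots+\amp_S)\sum_i\eps_i^2/(2K)$ emerges as a full expectation over $\sigma,\noise$ rather than weighted by $1-\eta$. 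Your Proposition~\ref{th:exactsimple}-style envelope $\bigl|\max_{|I|=S}\|\pdico_I^\star y\|_1 - \|\pdico_{I_p}^\star y\|_1\bigr|\leq 2\eps S(\sqrt{B}+1)$ relied on $\dico$'s maximum being at $I_p$, which is no longer guaranteed once the bad event includes coherence and noise failures.
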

\begin{proof}
To prove the proposition we digress from the conventional scheme of first calculating the expectation of our objective function for both the original and a perturbed dictionary and then comparing and instead bound the difference of the expectations directly.
\begin{align}
\E_y &\left( \max_{|I| = S} \| \dico_I^\star y\|_1 \right)- \E_y \left( \max_{|I| = S} \| \pdico_I^\star y\|_1 \right)\notag\\
&= \E_{p,\sigma,\noise}  \left( \max_{|I| = S} \left\|  \frac{\dico_I^\star (\dico \amp_{p,\sigma}+\noise)}{\sqrt{1+\|\noise\|_2^2} }\right\|_1 - \max_{|I| = S} \left\|  \frac{\pdico_I^\star (\dico \amp_{p,\sigma}+\noise)}{\sqrt{1+\|\noise\|_2^2} }\right\|_1  \right)\notag\\
&=\E_{p,\sigma,\noise} \left( \frac{\max_{|I| = S} \left\| \dico_I^\star (\dico \amp_{p,\sigma}+\noise) \right\|_1 - \max_{|I| = S} \left\| \pdico_I^\star (\dico \amp_{p,\sigma}+\noise) \right\|_1}{\sqrt{1+\|\noise\|_2^2}}  \right):=\E_{p,\sigma,\noise}(\Delta_{p,\sigma,\noise})\notag
\end{align}
Again our strategy is to show that for a fixed $p$ for most $\sigma$ and $\noise$ the maximal response of both the original dictionary and the perturbation is attained at $I_p$. The expressions we therefore need to lower (upper) bound for $i\in I_p$ ($i \notin I_p$) are
\begin{align*} 
 |\ip{ \atom_{i}}{\dico \amp_{p,\sigma} +\noise }| &= \big |\sigma_i c_{p(i)} + \sum_{j\neq i} \sigma_j c_{p(j)}  \ip{\atom_i}{\atom_j} + \ip{\atom_i}{\noise}\big|,\\
  |\ip{ \patom_{i}}{\dico \amp_{p,\sigma} +\noise }| &=\big |\alpha_i \sigma_i c_{p(i)} + \alpha_i \sum_{j\neq i} \sigma_j c_{p(j)} \ip{\atom_i}{\atom_j} +\omega_i \ip{z_i}{\dico \amp_{p,\sigma}} +\ip{\patom_i}{\noise} \big|.
\end{align*}
However, instead of using a worst case estimate for the gap between the responses of the original dictionary inside and outside $I_p$, we now make use of the fact that for most sign sequences we have a gap size of order $c_S - c_{S+1}$.
This means that as soon as $|\sum_{j\neq i} \sigma_j c_{p(j)}  \ip{\atom_i}{\atom_j}|$, $\omega_i |\ip{z_i}{\dico \amp_{p,\sigma}}|$ and the noise related terms $|\ip{\atom_i}{\noise}|$ and $|\ip{\patom_i}{\noise}|$ are of order $(c_S - c_{S+1})$
the maximal response of both the original dictionary and the perturbation is attained at $I_p$.
In particular, defining the sets
\begin{align}
\Sigma_p:=\bigcup_{i} &\left\{  \sigma \mbox{ s.t. }  \Big| \sum_{j\neq i} \sigma_j \amp_{p(j)}\ip{\atom_{i}}{ \atom_j }\Big| \geq \frac{c_S - c_{S+1}}{6} \: \mbox{ or } \: \omega_i|\ip{z_i}{\dico \amp_{p,\sigma}}| \geq 
 \frac{ c_S - c_{S+1}-\frac{ 3\eps^2}{2} }{6}\right\},\notag
\end{align} for a fixed permutation $p$ and 
\begin{align}
R:=\bigcup_{i} &\left\{  r \mbox{ s.t. } | \ip{\atom_i}{\noise}|\geq \frac{c_S-c_{S+1}}{3} \: \mbox{ or } \: | \ip{\patom_i}{\noise}|\geq \frac{c_S-c_{S+1}}{6} \right\},\notag
\end{align}
we see that both maxima are attained at $I_p$ as long as $\sigma \notin \Sigma_p$ and $r \notin R$.
Using Hoeffding's inequality we get that
\begin{align}
\P\left(  \Big| \sum_{j\neq i} \sigma_j \amp_{p(j)}\ip{\atom_{i}}{ \atom_j }\Big| > t  \right) \notag
&\leq 2 \exp \left( \frac{-t^2}{2 \sum_{j\neq i} \amp^2_{p(j)} |\ip{\atom_{i}}{ \atom_j } |^2} \right) \leq 2 \exp\left( \frac{-t^2}{2\mu^2} \right),\notag
\end{align}
while from the proof of Proposition~\ref{th:exactsimple} we know that
for $\eps_{i}\neq 0$ we have
$
\P(\omega_{i} | \langle z_{i}, \dico \amp_{p,\sigma} \rangle|  \geq s ) \leq 2 \exp \left(-\frac{s^2}{2 \eps_{i}^2}\right)
$. Setting $t=(c_S - c_{S+1})/6$, $s=( c_S - c_{S+1}-\frac{3 \eps^2}{2})/6$ and using a union bound then leads to 
\begin{align}\label{Pbadsigma}
\P(\Sigma_p) \leq 2K\exp \left(-\frac{\big(c_S - c_{S+1}-\frac{3 \eps^2}{2} \big)^2 }{72 \eps^2 }\right) + 2K \exp\left( \frac{-(c_S - c_{S+1})^2}{72\mu^2} \right).
\end{align}
Since the $\noise(i)$ are subgaussian with parameter $\nsigma$ we have for any $v=(v_1\ldots v_d)$ and $t\geq0$, $\P(|\ip{v}{r}|\geq t)\leq \exp\left(-\frac{t^2}{2\nsigma^2 \|v\|_2^2}\right)$, see e.g. \cite{ve10}. Taking a union bound over all $\atom_i, \patom_i$ with the corresponding choice for $t$ then leads to the estimate
\begin{align}\label{Pbadnoise}
\P(R)\leq 2K \exp \left(-\frac{\big(c_S - c_{S+1}\big)^2 }{72 \nsigma^2 }\right) + 2K \exp \left(-\frac{\big(c_S - c_{S+1}\big)^2 }{18 \nsigma^2 }\right).
\end{align}
We now split the expectations over the sign and noise patterns for a fixed p to get
\begin{align}
 \E_\sigma \E_\noise(\Delta_{p,\sigma,\noise})
&= \E_\sigma \left( \int_{r\notin R}  \Delta_{p,\sigma,\noise} d\nu_r\right) +  \E_\sigma \left( \int_{r\in R}  \Delta_{p,\sigma,\noise} d\nu_r\right) \notag \\
&=\sum_{\sigma\notin \Sigma_p} \P(\sigma)  \int_{r\notin R}  \Delta_{p,\sigma,\noise} d\nu_r + \sum_{\sigma\in \Sigma_p} \P(\sigma)  \int_{r\notin R}  \Delta_{p,\sigma,\noise} d\nu_r\notag\\
&\hspace{7cm}+\E_\sigma \left( \int_{r\in R}  \Delta_{p,\sigma,\noise} d\nu_r\right). \label{eq:splitprob}
\end{align}
Next note that $\Sigma_p$ is symmetric in the sense that we either have $(\sigma_1, \ldots ,\pm \sigma_i,\ldots ,\sigma_K) \in \Sigma_p$ or $(\sigma_1, \ldots ,\pm \sigma_i,\ldots ,\sigma_K) \notin \Sigma_p$. 
Thus we get for the first term in \eqref{eq:splitprob},
 \begin{align} 
\sum_{\sigma\notin \Sigma_p} \P(\sigma)  \int_{r\notin R}  \Delta_{p,\sigma,\noise} d\nu_r &=  \int_{r\notin R} \sum_{\sigma\notin \Sigma_p} \P(\sigma)   \left( \frac{ \left\| \dico_{I_p}^\star (\dico \amp_{p,\sigma}+\noise) \right\|_1 -  \left\| \pdico_{I_p}^\star (\dico \amp_{p,\sigma}+\noise) \right\|_1}{\sqrt{1+\|\noise\|_2^2}}  \right) d\nu_r \notag\\
&=  \int_{r\notin R} \sum_{\sigma\notin \Sigma_p} \P(\sigma)   \left( \frac{ \sum_{i\in I_p} c_{p(i)} \frac{\eps_i^2}{2} }{\sqrt{1+\|\noise\|_2^2}}  \right) d\nu_r.\notag
\end{align}
To bound the last two terms in \eqref{eq:splitprob} we first find an upper bound for$\max_{|I| = S} \| \pdico_I^\star (\dico c_{p,\sigma} +r )\|_1$:
\begin{align}
\max_{|I| = S} \| \pdico_I^\star( \dico c_{p,\sigma} +r )\|_1
&= \max_{|I| = S} \sum_{i\in I} |\ip{\alpha_i \atom_i + \omega_i z_i} {\dico c_{p,\sigma} +r )}|\notag\\
&\leq \max_{|I| = S} \sum_{i\in I} \big(1-\frac{\eps_i^2}{2}\big) |\ip{\atom_i} {\dico c_{p,\sigma} +r )}| + \eps_i \| \dico c_{p,\sigma} +r \|_2\notag\\
&\leq \max_{|I| = S} \sum_{i\in I}  \big(1-\frac{\eps^2}{2}\big)|\ip{\atom_i} {\dico c_{p,\sigma} +r )}| + \eps\big(\sqrt{B}+\| r\|_2\big)\notag\\
&= \big(1-\frac{\eps^2}{2}\big)\max_{|I| = S} \| \dico_I^\star (\dico c_{p,\sigma} +r )\|_1+ \eps S \big(\sqrt{B}+\| r\|_2\big).\notag
\end{align}
This then leads to the following lower bound for $\Delta_{p,\sigma,\noise} $:
\begin{align}
\Delta_{p,\sigma,\noise}&\geq (1+\|\noise\|_2^2)^{-1/2}\left( \max_{|I| = S} \| \dico_I^\star (\dico c_{p,\sigma} +r )\|_1 \frac{\eps^2}{2}-\eps S \big(\sqrt{B}+\| r\|_2\big)\right)\notag\\
&\geq (1+\|\noise\|_2^2)^{-1/2}\left( \| \dico_{I_p}^\star (\dico c_{p,\sigma} +r )\|_1 \frac{\eps^2}{2}-\eps S \big(\sqrt{B}+\| r\|_2\big)\right).\notag
 \end{align}
Using again the symmetry of $\Sigma_p$ we have 
\begin{align}
\sum_{\sigma\in \Sigma_p} \P(\sigma)  \int_{r\notin R}  \Delta_{p,\sigma,\noise} d\nu_r &\geq \int_{r\notin R}  \sum_{\sigma\in \Sigma_p} \P(\sigma) \frac{ \| \dico_{I_p}^\star (\dico c_{p,\sigma} +r )\|_1 \frac{\eps^2}{2}-\eps S \big(\sqrt{B}+\| r\|_2\big)}{\sqrt{1+\|\noise\|_2^2}} d\nu_r\notag\\
&\geq \int_{r\notin R}  \sum_{\sigma\in \Sigma_p} \P(\sigma)\left( \frac{  \sum_{i\in I_p} c_{p(i)}  \frac{\eps^2}{2}}{\sqrt{1+\|\noise\|_2^2}} - \eps S \big(\sqrt{B}+1\big)\right)d\nu_r\notag\\
&\geq \int_{r\notin R}  \sum_{\sigma\in \Sigma_p} \P(\sigma)\left( \frac{  \sum_{i\in I_p} c_{p(i)}  \frac{\eps_i^2}{2}}{\sqrt{1+\|\noise\|_2^2}} - \eps S \big(\sqrt{B}+1\big)\right)d\nu_r,\notag
\end{align}
and similarly
\begin{align}
\E_\sigma \left( \int_{r\in R}  \Delta_{p,\sigma,\noise} d\nu_r\right) &\geq \int_{r \in R} \E_\sigma\left( \frac{ \| \dico_{I_p}^\star (\dico c_{p,\sigma} +r )\|_1 \frac{\eps^2}{2}-\eps S \big(\sqrt{B}+\| r\|_2\big)}{\sqrt{1+\|\noise\|_2^2}} \right)d\nu_r\notag\\
&\geq \int_{r \in R}\frac{  \sum_{i\in I_p} c_{p(i)}  \frac{\eps_i^2}{2}}{\sqrt{1+\|\noise\|_2^2}} - \eps S \big(\sqrt{B}+1\big) d\nu_r.\notag
\end{align}
Resubstituting into~\eqref{eq:splitprob} we get
\begin{align}
 \E_\sigma \E_\noise(\Delta_{p,\sigma,\noise})&\geq \int_{r \in R}\frac{  \sum_{i\in I_p} c_{p(i)}  \frac{\eps_i^2}{2}}{\sqrt{1+\|\noise\|_2^2}} - \eps S \big(\sqrt{B}+1\big) d\nu_r\notag\\
&\hspace{2cm}+ \int_{r\notin R}  \sum_{\sigma\in \Sigma_p} \P(\sigma)\left( \frac{  \sum_{i\in I_p} c_{p(i)}  \frac{\eps_i^2}{2}}{\sqrt{1+\|\noise\|_2^2}} - \eps S \big(\sqrt{B}+1\big)\right)d\nu_r\notag\\
&\hspace{4cm}+\int_{r\notin R} \sum_{\sigma\notin \Sigma_p} \P(\sigma)   \left( \frac{ \sum_{i\in I_p} c_{p(i)} \frac{\eps_i^2}{2} }{\sqrt{1+\|\noise\|_2^2}}  \right) d\nu_r \notag\\
&\geq \int_{r}\frac{  \sum_{i\in I_p} c_{p(i)}  \frac{\eps_i^2}{2}}{\sqrt{1+\|\noise\|_2^2}}  d\nu_r  - \eps S \big(\sqrt{B}+1\big)\cdot \big(P(R)+P(\Sigma_p)\big). \label{deltalowerbound}
\end{align}
Taking the expectation over the permutations then yields
\begin{align}
 \E_{p,\sigma,\noise} (\Delta_{p,\sigma,\noise})&\geq \E_\noise \E_p \left( \frac{  \sum_{i\in I_p} c_{p(i)}  \frac{\eps_i^2}{2}}{\sqrt{1+\|\noise\|_2^2}}\right)  - \eps S \big(\sqrt{B}+1\big)\cdot \big(\P(R)+\E_p \P(\Sigma_p)\big)\notag\\
 &\geq \E_\noise \left(\frac{1}{\sqrt{1+\|\noise\|_2^2}}\right) \frac{c_1+\ldots + c_S}{2K} \sum_i \eps_i^2  - \eps S \big(\sqrt{B}+1\big)\cdot \big(\P(R)+\E_p \P(\Sigma_p)\big).\notag
\end{align}
Using the probability estimates from \eqref{Pbadsigma}/\eqref{Pbadnoise} we see that $ \E_{p,\sigma,\noise} (\Delta_{p,\sigma,\noise})>0$ is implied by
\begin{align}
\eps \geq \frac{4SK^2\big(\sqrt{B}+1\big)}{C_\noise \gamma}\left(  \exp \left(\frac{-\big(\beta-\frac{3 \eps^2}{2} \big)^2 }{72 \eps^2 }\right) + \exp\left( \frac{-\beta^2}{72\mu^2} \right) + \exp\left(\frac{-\beta^2 }{72 \nsigma^2 }\right) + \exp \left(\frac{-\beta^2 }{18 \nsigma^2 }\right) \right),\notag
\end{align}
where we have used the abbreviations $\gamma=c_1+\ldots + c_S$, $\beta=c_S - c_{S+1}$ and $C_\noise= \E_\noise \left((1+\|\noise\|_2^2)^{-1/2}\right)$. We now
 proceed by splitting the above condition.
We define $\epsmin$ by asking that
\begin{align}
\frac{\eps}{3} \geq \frac{4SK^2\big(\sqrt{B}+1\big)}{C_\noise \gamma} \exp\left(- \frac{\beta^2}{72\max\{\mu^2,\nsigma^2\}} \right) :=\frac{\epsmin}{3}\notag
\end{align}
and $\epsmax$ implicitly by asking that
\begin{align}
\frac{\eps}{3} - \frac{\eps^4}{81} \geq \frac{4SK^2\big(\sqrt{B}+1\big)}{C_\noise \gamma} \exp \left(-\frac{\big(\beta-\frac{3 \eps^2}{2} \big)^2 }{72 \eps^2 }\right).\notag
\end{align}
Following the line of argument in the proof of Proposition~\ref{th:exactsimple} we see that the above condition is guaranteed as soon as 
\begin{align}
 \eps \leq \frac{\beta}{\frac{5}{2} + 9\sqrt{ \log\left( \frac{112 K^2 S( \sqrt{B}+1)}{C_\noise \beta \gamma}\right) } }:=\epsmax.\notag
 \end{align}
The statement follows from making sure that $\epsmin < \epsmax$.
\end{proof}

\begin{proof}[of Theorem~\ref{th:stableasympt}]
Using the pull-back probability measure $\nu_\amp$ we can write
\begin{align}
\E_y &\left( \max_{|I| = S} \| \dico_I^\star y\|_1 \right)- \E_y \left( \max_{|I| = S} \| \pdico_I^\star y\|_1 \right)
=\int_{c(x)} \E_{p,\sigma,\noise}\left (\Delta_{p,\sigma,\noise, c(x)}\right)d\nu_c,\notag
\end{align}
where $\Delta_{p,\sigma,\noise, c(x)}$ is defined analogue to $\Delta_{p,\sigma,\noise}$ in the last proof, that is replacing $c$ by $c(x)$.
The statement follows from employing the lower estimate for $\E_{p,\sigma,\noise}\left (\Delta_{p,\sigma,\noise, c(x)}\right)$ from \eqref{deltalowerbound} and replacing $c_1+\ldots +c_S$ by $\bar c_1+\ldots +\bar c_S$ resp. $c_S-c_{S+1}$ by its lower bound $\beta$ in the proof of Proposition~\ref{th:stablesimple}.
\end{proof}

\subsection{Proof of Theorems \ref{th:exactfinite} and \ref{th:stablefinite}} \label{app:finite}
Since the proofs of Theorems~\ref{th:exactfinite} and \ref{th:stablefinite} are conceptually equivalent we will 
combine them into one and just split the argument for the inevitable juggling of constants.
\begin{proof}
As outlined in the proof idea we need a Lipschitz property for the mapping $\pdico \rightarrow \frac{1}{N} \sum_{n=1}^N \max_{|I| = S} \| \pdico_I^\star y_n\|_1$ for both signal models, the concentration of the sum around its expectation for a $\delta$ net covering the space of all admissible dictionaries close to $\dico$ and a triangle inequality argument to get to the final statement.\\
To show the Lipschitz property we use a reverse triangle inequality:
\begin{align}
\left| \max_{|I| = S} \| \pdico_I^\star y_n\|_1-\max_{|I| = S} \| \ppdico_I^\star y_n\|_1\right| 
&= \left| \max_{|I| = S} \| \ppdico_I^\star y_n- (\ppdico_I^\star - \pdico_I^\star) y_n\|_1-\max_{|I| = S} \| \ppdico_I^\star y_n\|_1\right| \notag \\
&\leq \max_{|I| = S} \| (\ppdico_I^\star - \pdico_I^\star) y_n\|_1 \notag\\
&\leq S \max_k \|\patom_k-\ppatom_k\|_2 \|y_n\|_2 \notag \\
&\leq d(\pdico, \ppdico) S \big(\sqrt{B} +1\big).\notag
\end{align}
Note that for the noise-free signal model we can replace $ \big(\sqrt{B} +1\big)$ by $\sqrt{B}$ in the last expression. By averaging over $n$ we get that the mapping in question is
Lipschitz with constant $S \big(\sqrt{B} +1\big)$ in the noisy and $S\sqrt{B}$ in the noise-free case, that is
\begin{align}
\left|  \frac{1}{N} \sum_{n=1}^N \max_{|I| = S} \| \pdico_I^\star y_n\|_1- \frac{1}{N} \sum_{n=1}^N \max_{|I| = S} \| \ppdico_I^\star y_n\|_1\right| &\leq d(\pdico, \ppdico) S \big(\sqrt{B} +1\big).\notag
\end{align}
To show that the averaged sums concentrate around their expectations we use our favourite tool Hoeffding's inequality. Set $X_n=\max_{|I| = S} \| \dico_I^\star y_n\|_1 - \max_{|I| = S} \| \pdico_I^\star y_n\|_1$, then we have $|X_n| \leq  \eps S \big(\sqrt{B} +1\big)$, resp. $|X_n| \leq \eps S \sqrt{B}$ in the noise-free case, and get the estimate
\begin{align}
&\P\left( \left| \frac{1}{N} \sum_{n=1}^N\left( \max_{|I| = S} \| \dico_I^\star y_n\|_1- \max_{|I| = S} \| \pdico_I^\star y_n\|_1\right) - \E \left(\max_{|I| = S} \| \dico_I^\star y_1\|_1 - \max_{|I| = S} \| \pdico_I^\star y_1\|_1  \right) \right| \geq 2 t \right) \notag \\ 
&\hspace{8cm} \leq 2\exp\left( \frac{-2Nt^2}{\eps^2 S^2 \big(\sqrt{B} +1\big)^2} \right).\notag
\end{align}
Next we need to choose a $\delta$-net for all perturbations $\pdico$ with $d(\dico,\pdico)\leq \epsmax$, that is a finite set of perturbations $\net$ such that for every $\pdico$ we can find
$\ppdico \in \net$ with $d(\pdico,\ppdico)\leq \delta$. Recalling the parametrisation of all $\eps$-perturbations from the proof of Proposition~\ref{th:exactsimple}, we see that the space we need to cover is included in the product of K balls with radius $\epsmax$ in dimension $\ddim$. From e.g. Lemma~2 in \cite{ve10} we know that for the $\ddim$ dimensional ball of radius $\epsmax$ we can find a $\delta$-net $\net_d$ satisfying $\sharp \net_d \leq \left(\epsmax + \frac{2\epsmax}{\delta}\right)^\ddim$, so for our space of $\eps$-perturbations we can find a $\delta$-net $\net$ satisfying
\begin{align}
\sharp \net \leq \left(\epsmax + \frac{2\epsmax}{\delta}\right)^{\natoms \ddim} \leq  \left( \frac{3\epsmax}{\delta}\right)^{K \ddim}.\notag
\end{align}
Taking a union bound we can now estimate the probability that we have concentration for all perturbations in the net as
\begin{align}
\P&\left( \exists \pdico \in \net: \left| \frac{1}{N} \sum_{n=1}^N\left( \max_{|I| = S} \| \dico_I^\star y_n\|_1- \max_{|I| = S} \| \pdico_I^\star y_n\|_1\right) \right. \right.\notag \\
&\hspace{6cm} \left. \left.\phantom{\frac{1}{N} \sum_{n=1}^N}- \E \left(\max_{|I| = S} \| \dico_I^\star y_1\|_1 - \max_{|I| = S} \| \pdico_I^\star y_1\|_1  \right) \right|  \geq 2t \right)\notag \\
&\hspace{5cm}\leq  \left( \frac{3\epsmax}{\delta}\right)^{K \ddim} 2\exp\left( \frac{-2Nt^2}{\epsmax^2 S^2 \big(\sqrt{B} +1\big)^2} \right).\notag
\end{align}
Finally we are ready for the triangle inequality argument. For any $\pdico$ with $d(\pdico,\dico)=\eps \leq \epsmax$ we can find $\ppdico \in \net$ with $d(\ppdico,\pdico) \leq \delta$ and $d(\dico,\ppdico)=\bar\eps$ and therefore get
\begin{align}
&\frac{1}{N} \sum_{n=1}^N \max_{|I| = S} \| \dico_I^\star y_n\|_1 - \frac{1}{N} \sum_{n=1}^N \max_{|I| = S} \| \pdico_I^\star y_n\|_1\notag\\
&= \frac{1}{N} \sum_{n=1}^N \max_{|I| = S} \| \dico_I^\star y_n\|_1 - \E \left(\max_{|I| = S} \| \dico_I^\star y_1\|_1  \right)
+ \E \left(\max_{|I| = S} \| \dico_I^\star y_1\|_1\right)  - \E \left(\max_{|I| = S} \| \ppdico_I^\star y_1\|_1\right) \notag\\
&+ \E \left(\max_{|I| = S} \| \ppdico_I^\star y_1\|_1\right) -  \frac{1}{N} \sum_{n=1}^N \max_{|I| = S} \| \ppdico_I^\star y_n\|_1
 +  \frac{1}{N} \sum_{n=1}^N \max_{|I| = S} \| \ppdico_I^\star y_n\|_1 -  \frac{1}{N} \sum_{n=1}^N \max_{|I| = S} \| \pdico_I^\star y_n\|_1\notag\\
&\hspace{3cm} \geq  \E \left(\max_{|I| = S} \| \dico_I^\star y_1\|_1\right)  - \E \left(\max_{|I| = S} \| \ppdico_I^\star y_1\|_1\right) - 2t - \delta S  \big(\sqrt{B} +1\big).\notag
\end{align}
Depending on the signal model we now have to substitute the values for the asymptotic differences $\E \left(\max_{|I| = S} \| \dico_I^\star y_1\|_1\right)  - \E \left(\max_{|I| = S} \| \ppdico_I^\star y_1\|_1\right)$   
calculated in the previous proofs. \\
Under the conditions given in Theorem~\ref{th:exactfinite} we have,
\begin{align}
\frac{1}{N} \sum_{n=1}^N &\max_{|I| = S} \| \dico_I^\star y_n\|_1 - \frac{1}{N} \sum_{n=1}^N \max_{|I| = S} \| \pdico_I^\star y_n\|_1\notag \\
&\geq
\bar \eps^2 \frac{\bar c_1+ \ldots +\bar c_S}{2K}-  4\bar \eps S K \sqrt{B}  \exp \left(-\frac{(\beta - \frac{\bar \eps^2}{2})^2}{8\bar\eps^2 }\right) - 2t - \delta S  \sqrt{B}. \label{finitediffexact}
\end{align}
To make sure that the above expression is larger than zero, we split it into two conditions. The first
condition
\begin{align}
\bar \eps^2 \frac{\bar c_1+ \ldots +\bar c_S}{4K} > 4\bar \eps S K \sqrt{B}  \exp \left(-\frac{(\beta - \frac{\bar \eps^2}{2})^2}{8\bar\eps^2 }\right)\notag
\end{align}
is satisfied as soon as 
\begin{align}
\bar \eps \leq \frac{\beta}{\frac{25\sqrt{8}}{99}\left(1 + 4\sqrt{ \log\left( \frac{50 K^2 S \sqrt{B}}{\beta (\bar c_1+ \ldots +\bar c_S)}\right) }\right)}.\notag
\end{align}
To concretise the second condition
\begin{align}
\bar \eps^2 \frac{\bar c_1+ \ldots +\bar c_S}{4K}\geq 2t + \delta S  \sqrt{B},\notag
\end{align}
we choose $t=\tilde\eps^2 \frac{\bar c_1+ \ldots +\bar c_S}{16K}$ and $\delta=\tilde \eps^2 \frac{\bar c_1+ \ldots +\bar c_S}{8KS \sqrt{B}}$ to arrive at $\bar\eps\geq 2 \tilde \eps$. 
Given that $\bar \eps $ differs at most by $\delta$ from $\eps$ we see that \eqref{finitediffexact} is larger than zero except with probability
\begin{align}
2 \exp\left(- \frac{N \tilde \eps^4 (\bar c_1+ \ldots +\bar c_S)^2}{128\epsmax^2 S^2K^2 B} + Kd \log  \left( \frac{24\epsmax KS \sqrt{B}  }{\tilde \eps^2 (\bar c_1+ \ldots +\bar c_S) }\right)\right), \notag
\end{align}
as long as 
\begin{align}\label{minmaxineq}
\epsmin:=  \tilde\eps +\frac{ \tilde\eps^2}{8K} \leq \eps \leq \epsmax \leq \frac{\beta}{\frac{25\sqrt{8}}{99}\left(1 + 4\sqrt{ \log\left( \frac{50 K^2 S \sqrt{B}}{\beta (\bar c_1+ \ldots +\bar c_S)}\right) }\right)} - \frac{ \tilde \eps^2}{8K}.
\end{align}
While for the asymptotic results we tried to make $\epsmax$ as large as possible to indicate how large the basin of attraction could be, for the finite sample size results we want it as small as possible in order to keep the sampling complexity small and therefore choose $\epsmax=  \epsmin$. The statement then follows from making sure that the right most inequality in~\eqref{minmaxineq} is satisfied and simplifications.\\
In case of the noisy signal model, that is under the conditions given in Theorem~\ref{th:stablefinite}, we have
\begin{align}
\frac{1}{N}& \sum_{n=1}^N \max_{|I| = S} \| \dico_I^\star y_n\|_1 - \frac{1}{N} \sum_{n=1}^N \max_{|I| = S} \| \pdico_I^\star y_n\|_1\notag \\
&\geq
\bar \eps^2 \frac{\bar c_1+ \ldots +\bar c_S}{2 C_\noise K} - 2t - \delta S \big(\sqrt{B}+1\big)- 2\bar \eps S K \big(\sqrt{B}+1\big)\cdot \notag \\
& \qquad \cdot \left(  \exp \left(\frac{-\big(\beta-\frac{ 3\bar\eps^2}{2} \big)^2 }{72 \bar \eps^2 }\right) + \exp\left( \frac{-\beta^2}{72\mu^2} \right) + \exp\left(\frac{-\beta^2 }{72 \nsigma^2 }\right) + \exp \left(-\frac{\beta^2 }{18 \nsigma^2 }\right) \right).
\label{finitediffnoisy}
\end{align}
Splitting equally gives us four conditions:
\begin{align}
\bar \eps &\geq \frac{16 C_\noise S K^2\big(\sqrt{B}+1\big)}{\bar c_1+ \ldots +\bar c_S} \exp\left(- \frac{\beta^2}{72\mu^2} \right):=\eps_\mu,\notag\\
\bar \eps &\geq \frac{16 C_\noise S K^2\big(\sqrt{B}+1\big)}{\bar c_1+ \ldots +\bar c_S} \exp\left(- \frac{\beta^2}{72\nsigma^2} \right):=\eps_\nsigma, \notag\\
\bar \eps^2 &\geq  \frac{8 C_\noise K}{\bar c_1+ \ldots +\bar c_S} \left(2t + \delta S \big(\sqrt{B}+1\big)\right), \notag \\
 \bar \eps (1- \frac{\bar \eps^3}{64})&> \frac{16 C_\noise S K^2\big(\sqrt{B}+1\big)}{\bar c_1+ \ldots +\bar c_S} \exp\left(-\frac{\big(\beta-\frac{ 3\bar\eps^2}{2} \big)^2 }{72 \bar \eps^2 }\right)\label{finitesampleseps}.
\end{align}
Choosing $t=\tilde\eps^2_{\mu,\nsigma} \frac{\bar c_1+ \ldots +\bar c_S}{32 C_\noise K}$ and $\delta=\tilde\eps^2_{\mu,\nsigma} \frac{\bar c_1+ \ldots +\bar c_S}{16KS (\sqrt{B}+1)}$ we can  merge the first three conditions to $\bar \eps\geq \tilde\eps^2_{\mu,\nsigma}$, while following the usual argument, Condition \eqref{finitesampleseps} is satisfied once
\begin{align}
\bar \eps \leq \frac{\beta}{\frac{9}{4} + 9\sqrt{ \log\left( \frac{150 K^2 S \big(\sqrt{B}+1\big)}{\beta C_\noise (\bar c_1+ \ldots +\bar c_S)}\right) } }.\notag
\end{align}
Given that $\bar \eps $ differs at most by $\delta$ from $\eps$ we see that \eqref{finitediffnoisy} is larger than zero except with probability
\begin{align}
 2\exp\left(- \frac{N\tilde\eps^4_{\mu,\nsigma}(\bar c_1+ \ldots +\bar c_S)^2}{512 \epsmax^2 C_\noise^2 S^2 K^2  \big(\sqrt{B}+1\big)^2} + Kd \log  \left( \frac{48 \epsmax KS \big(\sqrt{B}+1\big) }{\tilde\eps^2_{\mu,\nsigma} (\bar c_1+ \ldots +\bar c_S)} \right)\right),\notag
\end{align}
as long as 
\begin{align}\label{minmaxineq2}
\epsmin:=\tilde\eps_{\mu,\nsigma} +\frac{\tilde\eps^2_{\mu,\nsigma}}{16 K} \leq  \eps \leq \epsmax \leq \frac{\beta}{\frac{9}{4} + 9\sqrt{ \log\left( \frac{150 K^2 S \big(\sqrt{B}+1\big)}{\beta C_\noise (\bar c_1+ \ldots +\bar c_S)}\right) } } - \frac{ N^{-2q}}{K}.
\end{align}
Again the statement follows from choosing $\epsmax =  \epsmin$, making sure that the right most inequality in~\eqref{minmaxineq2} is satisfied and simplifications. 
\end{proof}

\bibliography{liod}
\end{document}